\newtheorem{theorem}{\textbf{Theorem}}[section]
\newtheorem{lemma}[theorem]{\textbf{Lemma}}
\theoremstyle{remark}
\newtheorem{remark}{\textit{Remark}}[section]
\newcommand{\RR}{\mathbb R}
\newcommand{\defeq}{\stackrel{\text{def}}{=\joinrel=}}
\renewcommand{\i}{\textrm{i}}
\newcommand{\e}{\textrm e}
\renewcommand{\Re}{\operatorname{Re}}
\renewcommand{\Im}{\operatorname{Im}}
\let\bs\boldsymbol
\newcommand{\myol}[2][3]{{}\mkern#1mu\overline{\mkern-#1mu#2}}
\newcommand{\lp}{\left(}
\newcommand{\rp}{\right)}
\newcommand{\lb}{\left[}
\newcommand{\rb}{\right]}
\newcommand{\der}[3]{\ifcat#31{
								\ifnum#3=1 {d#1\over d#2}
								\else{d^#3#1\over d#2^#3}
								\fi}
				\else{d^#3#1\over d#2^#3}
				\fi}
\newcommand{\pder}[3]{\ifcat#31{
								\ifnum#3=1 {\partial#1\over \partial#2}
								\else{\partial^#3#1\over \partial#2^#3}
								\fi}
				\else{\partial^#3#1\over \partial#2^#3}
				\fi}
\newcommand{\nablaT}{\nabla_\text{T}}
\newcommand{\DeltaT}{\Delta_\text{T}}
\newcommand{\nT}{\bs n_\text{T}}
\newcommand{\tlength}{L}
\begin{document}
\begin{frontmatter}
\title{Acoustic boundary layers as boundary conditions}

\author{Martin Berggren\corref{coa}}
\ead{martin.berggren@cs.umu.se}
\author{Anders Bernland\corref{}}
\ead{anders.bernland@cs.umu.se}
\author{Daniel Noreland\corref{}}
\ead{daniel.noreland@cs.umu.se}
\address{Department of Computing Science, Umeå University, Sweden}
\cortext[coa]{Corresponding author}   

\begin{abstract}
The linearized, compressible Navier--Stokes equations can be used to model acoustic wave propagation in the presence of viscous and thermal boundary layers.
However, acoustic boundary layers are notorious for invoking prohibitively high resolution requirements on numerical solutions of the equations. 
We derive and present a strategy for how viscous and thermal boundary-layer effects can be represented as a boundary condition on the standard Helmholtz equation for the acoustic pressure.
This boundary condition constitutes an $O(\delta)$ perturbation, where $\delta$ is the boundary-layer thickness, of the vanishing Neumann condition for the acoustic pressure associated with a lossless sound-hard wall.
The approximate model is valid when the wavelength and the minimum radius of curvature of the wall is much larger than the boundary layer thickness. 
In the special case of sound propagation in a cylindrical duct, the model collapses to the classical Kirchhoff solution.
We assess the model in the case of sound propagation through a compression driver, a kind of transducer that is commonly used to feed horn loudspeakers.
Due to the presence of shallow chambers and thin slits in the device, it is crucial to include modeling of visco--thermal losses in the acoustic analysis.
The transmitted power spectrum through the device calculated numerically using our model agrees well with computations using a hybrid model, where the full linearized, compressible Navier--Stokes equations are solved in the narrow regions of the device and the inviscid Helmholtz equations elsewhere.
However, our model needs about two orders of magnitude less memory and computational time than the more complete model.

\end{abstract}

\begin{keyword}
acoustics \sep visco--thermal boundary layers\sep Helmholtz equation \sep Wentzell boundary condition \sep
compression driver
\end{keyword}

\end{frontmatter}

\section{Background}

The classical wave equation and its time harmonic counterpart, the Helmholtz equation, provide accurate mathematical models for
acoustic wave propagation under a wide range of conditions. 
Effects that are not accounted for in the linear regime are mainly related to various loss mechanisms, manifested in two different ways. 
On the one hand there is bulk loss, which is a consequence of bulk shear, heat conduction and molecular exchange of energy. 
Bulk losses are small and have in general an appreciable effect only for propagation over long distances~\cite[\S~6.4]{MoIn68}.

Boundary effects, on the other hand, are due to heat exchange with walls and viscous dissipation owing to the shear motion caused by the contact or non-slip boundary condition at the wall boundary. 
The relative importance of the thermal and viscous losses varies with the type of medium, the wavelength, and the
characteristic size of the domain.  
In air, thermal and viscous boundary effects are of the same order of magnitude at audio frequencies, since the Prandtl number for air is close to unity. 
For devices such as hearing aids, microphones, and micro-loudspeakers, these effects can have a great influence on the generated or detected sound power level. 
Another type of devices that often cannot be modeled with sufficient accuracy without a satisfactory damping model are musical wind instruments. Acoustic damping is also important when studying the damped vibrations of Micro Electro Mechanical Systems (MEMS) structures.
As we will see in the analysis of \S~\ref{BLT}, a main feature of systems for which visco-thermal losses need to be taken into account is that the quotient of the total solid-surface area to the air volume is significant.

All the above-mentioned damping effects may be accounted for by including the appropriate constitutive relations in the linearized, compressible Navier--Stokes equations.
There are software packages that can carry out numerical solutions of the full set of linearized Navier--Stokes equations for acoustic purposes, but the computational cost is generally very high, also for devices that are acoustically small.
The reason for the high computational cost is partly the introduction of four extra variables, three components for the acoustic velocity and one for temperature (or entropy) fluctuations, compared to the classical wave equation formulated in terms of the acoustic pressure only. 
An even more serious problem is the viscous and thermal boundary layers, which typically are very thin in relation to the free-space wavelength and the characteristic dimensions of the geometry.
Thus, the computational mesh has to be extremely fine in the vicinity of solid boundaries to resolve the large gradients in the boundary layers.
A recommendation from one of the major software providers~\cite[Ch.~5]{Co17AM} is to hybridize and use the full Navier--Stokes equations only when absolutely necessary, such as in thin slits, and couple these equations to the pressure Helmholtz equation for the rest of the domain. 

There is a long history of efforts to approximately account for boundary losses, going back at least to Kirchhoff~\cite{Ki68}.
One approach has been to consider particular geometries for which exact or approximate solutions to the linearized Navier--Stokes equations can be established. 
For waveguides, the solutions is typically represented in a 1D analysis by a complex propagation constant $\phi$, so that the acoustic pressure at axial position $z$ satisfies $p(z,t) = (A\e^{-\phi z} + B\e^{\phi z})\e^{\i\omega t}$, where $A$ and $B$ are amplitudes of waves propagating in the positive and negative $z$-direction, respectively.
For the case of propagation in a circular pipe, Tijdeman~\cite{Ti75} reviews and summarizes a large number of results in terms of four nondimensional parameters.
With a similar 1D-analysis, Richards~\cite{Ri86} covers the case of waves propagating between two infinite parallel plates and also, approximately, certain non-cylindrical tube geometries.
The case of arbitrary cross sections requires numerical solutions in general~\cite{Cu93}.

Another approach, also with a long history, is to use boundary-layer analysis.
Two recent expositions of the technique are by Rienstra \& Hirschberg~\cite[\S~4.5]{RiHi15} and Searby et al.~\cite{SeNiHaLa08}.
A typical use of the boundary-layer analysis has been to calculate the propagation constant for waveguides in the limit of a large radii~\cite[\S~4.5.3]{RiHi15}.
As opposed to the approach reviewed by Tijdeman~\cite{Ti75}, the results of the boundary-layer analysis will be independent of the cross-section shape, but the results are not valid for small radii in the order of the boundary-layer thickness.
Another use of the boundary-layer analysis is suggested by Searby et al.~\cite{SeNiHaLa08},  who propose a two-step procedure for calculation of the total visco--thermal losses in cavities.
The first step consists of numerically solving the Helmholtz equation for the acoustic pressure in the whole cavity.
With this  isentropically calculated pressure at the boundary as input data, the total visco--thermal boundary losses are then computed in a second step  using boundary-layer theory.
Bossart et al.~\cite{BoJoBr03} takes this idea one step further in a predictor--corrector manner and propose to recompute the outer problem using the boundary-layer solution in order to modify the wavenumber and an admittance boundary condition.

The aim of this article is to derive and propose a boundary condition that can be supplied to isentropic acoustics models, such as the Helmholtz equation for the acoustic pressure, in order to account for visco--thermal boundary losses in numerical simulations. 
The derivation is based on a boundary-layer analysis of the linearized, compressible Navier--Stokes equations.
The basic idea is to rewrite the equation of mass conservation in the boundary layer as an equation with constant acoustic density and pressure, which would be the case under isentropic conditions, and compensate for the error, to first order, by a modified wall-normal boundary condition for the acoustic velocity.
For the pressure Helmholtz equation, the final boundary condition turns out to be a so-called Wentzell (or Venttsel') condition, a  generalization of an impedance (or Robin) boundary condition, including a surface Laplacian of the pressure. 
The starting point of the derivation is a flat-wall boundary-layer analysis, and it therefore assumes that the boundary-layer thickness is small compared to the isentropic wavelength as well as the maximal radius of curvature of the wall. 
Based on the asymptotic analysis by Schmidt et al.~\cite{ScThJo14}, and using a different approach than ours, the viscous--but not the thermal--part of the proposed boundary conditions have previously been derived in a quite recent report by  Schmidt and Thöns--Zueva~\cite{ScTh14}.

The difficulty with boundary layers whose resolution demands a very fine mesh, thus requiring significant computational resources, are present also in many turbulent flow simulations. 
Common approaches to reduce the computational burden here include modelling the velocity within the boundary layer using so called wall functions~\cite{LaSp74}, or setting a modified Dirichlet~\cite{BaMiCaHu07} or Robin~\cite{JoUt15} boundary condition for the velocity a short distance away from the wall. 
Although with a similar motivation as in the present study, the form of the boundary conditions, and their applications to turbulence models outside the boundary layer, are entirely different.

We show that a typical problem setup for cavity acoustics using the proposed boundary conditions leads to a well-posed mathematical problem.
Moreover, we show that our model yields the same expression for the total visco--thermal power losses in a cavity as discussed by Searby et al.~\cite{SeNiHaLa08} and, in the special case of propagation in wave guides, that our model yields the same dispersion relation as derived by Rienstra \& Hirschberg~\cite[\S~4.5]{RiHi15}.
The limits of applicability of our model is quantitatively assessed in the case of propagation in narrow wave guides with a circular cross section, a case in which exact solution of the linearized Navier--Stokes equations are available.
The proposed boundary-condition is straightforward to implement in an existing finite-element software, which we demonstrate by implementing it using Comsol Multiphysics' so-called Weak Form PDE Interface~\cite[Ch.~16]{Co17Ref}.
Our model is tested on the case of wave propagation through a compression driver, a type of transducer used to feed mid-range horns, and we show that visco--thermal losses are important to account for in this application and that the proposed model yields essentially the same results on the transmitted power spectrum as a simulation using a hybrid Navier--Stokes/Helmholtz approach at a fraction of the computation cost in terms of CPU time and computer memory usage.

\section{Visco--thermal acoustic equations}

Since our focus is on  acoustics  in air under atmospheric conditions, the starting point for the modeling is the compressible Navier--Stokes equations together with  standard constitutive relations.
That is, air is regarded as an ideal gas and a Newtonian fluid with constant viscosity and satisfying the Stokes hypothesis, and the heat flux satisfies Fourier’s law with a constant thermal conductivity.

Linearizing  the compressible Navier--Stokes equations around quiescent air at static pressure, density, and temperature $p_0$, $\rho_0$, and $T_0$, we obtain in frequency domain the system
\begin{subequations}\label{linNS}
\begin{align}
\i\omega\rho + \rho_0\nabla\cdot\bs U &= 0,				\label{linMassC}
\\
\i\omega\bs U  + \frac1{\rho_0}\nabla p - \nu\big(\Delta\bs U + \frac13\nabla(\nabla\cdot\bs U)\big) 
&= \bs 0,								\label{linMomC}
\\
\i\omega\rho_0 c_V T + p_0\nabla\cdot\bs U - \kappa\Delta T &= 0,
\label{linEnerC}
\end{align}
\end{subequations}
where $\rho$, $p$, $\bs U$, and $T$ are the complex amplitudes of the acoustic disturbances in density, pressure, velocity, and temperature, $\omega$ the angular frequency, $\nu$ the kinematic viscosity coefficient, $c_V$ the  specific heat capacity at constant volume, and $\kappa$ the thermal conductivity.
We use the phase convention $\e^{\i\omega t}$, so that in time domain, the acoustic pressure, for instance, will be $P(\bs x, t) = \Re p(\bs x)\e^{\i\omega t}$, and we assume that system~\eqref{linNS}  is driven through an inhomogeneous boundary condition or through a wave originating in the far field.

The static conditions satisfy the ideal gas law
\begin{equation}\label{idealgas}
p_0 = r \rho_0 T_0,
\end{equation}
where the gas constant $r$ is given by the difference of the specific heats at constant pressure and volume,
\begin{equation}\label{r}
r = c_p - c_V,
\end{equation}
Linearizing the ideal gas law, we find that the acoustic disturbances satisfy
\begin{equation}\label{linIGL}
\frac p{p_0} = \frac\rho{\rho_0} + \frac T{T_0}.
\end{equation}
Moreover, the speed of sound $c$ satisfies
\begin{equation}\label{c2}
c^2 = \gamma\frac{p_0}{\rho_0},
\end{equation}
where 
\begin{equation}\label{gamma}
\gamma=\frac{c_p}{c_V}.
\end{equation}

Since the viscosity 
and thermal conductivity coefficients are very small, visco--thermal effects may in most parts of the domain be ignored, which leads to isentropic conditions, and thus that the speed of sound in these particular regions (but not generally) will satisfy
\begin{equation}\label{c2ie}
c^2 = \frac{p}{\rho}.
\end{equation}  
As a consequence, under isentropic assumptions, the variables $p$, $\rho$, and $T$ will, by equation~\eqref{linIGL}, be proportional to each other, and the system~\eqref{linMassC}--\eqref{linMomC} will then reduce to the following wave equation in first-order, frequency-domain form,
\begin{subequations}\label{wave1ord}
\begin{align}
\frac{\i\omega }{c^2}p + \nabla\cdot\rho_0\bs U &= 0,	
\label{wave1ord_mc}	\\
\i\omega\rho_0 \bs U  + \nabla p
&= \bs 0,			\label{wave1ord_momc}	
\end{align}
\end{subequations}
which, after eliminating $\bs U$, can be written as a Helmholtz equation solely in the pressure,
%
\begin{equation}\label{Helmholtz}
-\Delta p - k_0^2 p = 0,
\end{equation}
where $k_0 = \omega/c$ is the isentropic wavenumber.

However,  the isentropic assumptions break down in the vicinity of solid walls modeled with the non-slip and isothermal boundary conditions
\begin{equation}\label{NSBC}
\bs U =\bs 0,
\qquad
T = 0,
\end{equation} 
which can be seen by the fact that it is only possible to set a vanishing normal component of the velocity, $\bs n\cdot\bs U=0$, as a boundary condition to system~\eqref{wave1ord}.
In the vicinity of a solid wall,  normal derivatives of the tangential velocity and the temperature will be large enough so that  not all visco--thermal terms can be ignored in equation~\eqref{linNS}. 
In order to account for the wall effects of viscosity and thermal conductivity, the isentropic system~\eqref{wave1ord} can in a narrow region close to the wall be replaced with a system of boundary-layer equations.

Assume that there is a flat solid wall located at the plane $y=0$, on which the non-slip and isothermal boundary conditions~\eqref{NSBC} should be imposed.
In a region close to the wall measured in terms of 
the boundary-layer thicknesses defined below,
the full system~\eqref{linNS} can be approximated, as derived in appendix~\ref{appBLeq}, with the acoustic boundary layer equations
\begin{subequations}\label{BLeqns}
\begin{align}
\i\omega\frac\rho{\rho_0} + \nablaT\cdot\bs u + \pder vy1 &= 0,				\label{BLeqnsMassC}
\\
\i\omega\bs u  + \frac1{\rho_0}\nablaT p 
- \nu \pder{\bs u}y2	&= \bs 0,								\label{BLeqnsMomCT}
\\
\pder py1 = 0,
\\
\i\omega\rho_0 c_V T + p_0\lp\nablaT\cdot\bs u + \pder vy1\rp 
- \kappa \pder Ty2 &= 0,
\label{BLeqnsEnerC}
\end{align}
\end{subequations}
where $\bs u = (u, 0, w)$ is the projection of the velocity vector $\bs U = (u, v, w)$ on the wall plane, and $\nablaT$ the corresponding projection of the operator  $\nabla$ , that is, 
\begin{equation}
\begin{aligned}
\nablaT = \lp\pder{}x1, 0, \pder{}z1\rp.
\end{aligned}
\end{equation}

In  appendix~\ref{appBLeq}, we construct exact solutions to system~\eqref{BLeqns} that satisfy the boundary conditions~\eqref{NSBC} and that exponentially, as $y\to+\infty$, approach the fields $\bs u^\infty(\bs r)$ and $p^\infty(\bs r)$, where $\bs r = (x, 0, z)$, which are assumed to be solutions to the isentropic equations~\eqref{wave1ord} evaluated at a point close to the wall but outside of the boundary layer.
These boundary-layer solutions can be written
\begin{subequations}\label{BLsol}
\begin{align}
\bs u &= \bs u^\infty(\bs r)\lp 1 - \e^{-(1+\i)y/\delta_V}\rp,				\label{BLusol}
\\
\frac{\rho_\text{e}}{\rho_0} &= \frac{\gamma-1}{\gamma}\frac{p^\infty(\bs r)}{p_0}\e^{-(1+\i)y/\delta_T},		\label{BLrhosol}
\end{align}
\end{subequations}
where 
\begin{equation}\label{excessdens}
\frac{\rho_\text{e}}{\rho_0} = \frac1{\rho_0}\lp\rho - \rho^\infty \rp= \frac1{\rho_0}\lp\rho - \frac{p^\infty}{c^2}\rp,
\end{equation}
is called the \emph{excess density}, and
\begin{equation}\label{dVdTdef}
\delta_V = \sqrt{\frac{2\nu}{\omega}},
\qquad
\delta_T =\sqrt{\frac{2\kappa}{\omega\rho_0 c_p}}
\end{equation}
the viscous and thermal \emph{acoustic boundary-layer thicknesses}.

Note that the boundary-layer solution as defined above, due to the small values of $\delta_V$ and $\delta_T$ quickly approaches the limit fields $\bs u^\infty$ and $p^\infty$, but that these values are not attained at any finite distance from the wall.
It would be possible to alter the approach and define a matched asymptotic expansion, but our purpose here is different; we will use the form of the boundary-layer solution in order to define an effective boundary condition to the isentropic equations to account for the boundary-layer effects.

\begin{remark}
The formation of boundary layers is due to the structure of the Navier--Stokes equations~\eqref{linNS} as a singularly-perturbed system.
That the thickness of the acoustic boundary layers in expressions~\eqref{dVdTdef} scales as the square root of the coefficients in the governing equations is a property that generally holds for  layers associated with singularly-perturbed equations.
For instance, the thickness of the classical  Prandtl-type of viscous boundary layer that develops over a flat plat subject to a steady free-stream flow parallel to the plate also scales as the square root of the viscosity.
However, in the Prandtl layer, the boundary-layer profile is not exponential like in expression~\eqref{BLusol}, and its thickness also grows as the square root of the distance from the leading edge~\cite{Schl87}.
\end{remark}

\section{Boundary layer effects modeled as a boundary condition}\label{BLasBC}

Our aim is to approximate the impact of the boundary layer with an effective wall boundary condition, which will be obtained by manipulations of  the mass conservation law.

The boundary-layer analysis described above provided formulas~\eqref{BLsol} for the tangential velocity and the excess density within the boundary.
Corresponding wall-normal velocity will be an order of magnitude smaller than the tangential velocity, as can be seen from  the scalings~\eqref{scalings} used to derive the boundary-layer equations.
Nevertheless, the wall-normal velocity $v$ at an arbitrary position $y = \tilde y$ within the boundary layer can be computed from the other variables by integrating equation~\eqref{BLeqnsMassC}, 
\begin{equation}\label{intBLMB}
\i\omega\int\limits_0^{\mathclap{\tilde y}}\frac\rho{\rho_0}\,\textit{dy} 
+ \int\limits_0^{\mathclap{\tilde y}} \nablaT\cdot\bs u\, \textit{dy} + v|_{y=\tilde y}  -  v|_{y=0} = 0
\end{equation}
where $v|_{y=0} = 0$ due to the non-slip boundary condition~\eqref{NSBC}, and where $\rho$ and $\bs u$ exponentially approach $\rho^\infty$ and $\bs u^\infty$ for increasing $\tilde y$.
Subtracting and adding $\rho^\infty/\rho_0$ (which is a function of wall position $\bs r$ only), using definition~\eqref{excessdens} and that  $v|_{y=0} = 0$,  we find that equation~\eqref{intBLMB} can be written
\begin{equation}\label{intBLMBb}
\i\omega\int\limits_0^{\mathclap{\tilde y}}\frac{\rho^e}{\rho_0}\,\textit{dy} 
+\i\omega\int\limits_0^{\mathclap{\tilde y}}\frac{\rho^\infty}{\rho_0}\,\textit{dy} 
+ \int\limits_0^{\mathclap{\tilde y}} \nablaT\cdot\bs u\, \textit{dy} + v|_{y=\tilde y} = 0
\end{equation}
An integration of the first  term in equation~\eqref{intBLMBb}, using formula~\eqref{BLrhosol},  yields
\begin{equation}\label{intrho}
\begin{aligned}
\i\omega\int\limits_0^{\mathclap{\tilde y}}\frac{\rho_\text{e}}{\rho_0}\,\textit{dy}
=
\delta_T\frac{\omega(\gamma-1)(1+\i)}{2\gamma p_0}p^\infty
\lp 1 - \e^{-(1+\i)\tilde y/\delta_T}\rp.
\end{aligned}
\end{equation}
Moreover, by expression~\eqref{BLusol}, we find that the third term in equation~\eqref{intBLMBb} can be evaluated as
\begin{equation}\label{intu}
\begin{aligned}
\int\limits_0^{\mathclap{\tilde y}} \nablaT\cdot\bs u\, \textit{dy} 
&= \int\limits_0^{\tilde y}\nablaT\cdot\bs u^\infty\lp1 - e^{-(1+i)y/\delta_V}\rp\,\textit{dy}
\\&
=\int\limits_0^{\tilde y}\nablaT\cdot\bs u^\infty\,\textit{dy} +
{\delta_V}\frac{1-\i}2 \nablaT\cdot\bs u^\infty\lp\e^{-(1+i)\tilde y/\delta_V }- 1\rp.  
\end{aligned}
\end{equation}
Substituting expressions~\eqref{intrho} and~\eqref{intu} into equation
~\eqref{intBLMBb}, we find that
\begin{equation}\label{intBLMBa}
\begin{aligned}
&\i\omega\int\limits_0^{\mathclap{\tilde y}}\frac{\rho^\infty}{\rho_0}\,\textit{dy} 
+ \int\limits_0^{\tilde y}\nablaT\cdot\bs u^\infty\,\textit{dy} 
+ v|_{y=\tilde y}
+ \delta_V\frac{\i-1}{2}\nablaT\cdot\bs u^\infty\lp 1 - \e^{-(1+\i)\tilde y/\delta_V}\rp
\\&\qquad\qquad
+ \delta_T\frac{\omega(\gamma-1)(1+\i)}{2\gamma p_0}p^\infty
\lp 1 - \e^{-(\i+1)\tilde y/\delta_T}\rp=0,
\end{aligned}
\end{equation}
which can be written
\begin{equation}\label{intBLMBc}
\begin{aligned}
&\i\omega\int\limits_0^{\mathclap{\tilde y}}\frac{\rho^\infty}{\rho_0}\,\textit{dy} 
+ \int\limits_0^{\tilde y}\nablaT\cdot\bs  u^\infty\,\textit{dy} 
+ v|_{y=\tilde y} + f(\tilde y) -  v_\text{W} = 0,
\end{aligned}
\end{equation}
where
\begin{subequations}\label{BLvs}
\begin{align}
f(y) &=
-\delta_V\frac{\i-1}{2}\nablaT\cdot\bs u^\infty \e^{-(1+\i) y/\delta_V}
- \delta_T\frac{\omega(\gamma-1)(1+\i)}{2\gamma p_0}p^\infty \e^{-(\i+1) y/\delta_T},
\label{BLvsy}\\
v_\text{W} = f(0)&=
- \delta_V\frac{\i-1}{2}\nablaT\cdot\bs u^\infty
- \delta_T\frac{\omega(\gamma-1)(1+\i)}{2\gamma p_0}p^\infty.
\label{BLvsW}
\end{align}
\end{subequations}

The function $f$ is of $O(\delta_V + \delta_T)$ at the wall and decays exponentially with its argument. 
%
We thus find that the integrated mass conservation law under boundary-layer approximation, equation~\eqref{intBLMB},  can be written as equation~\eqref{intBLMBc}, which, if the term $f(\tilde y)$ is ignored, all the effects of the boundary layer has been pushed into $v_\text{W}$. 
Recall that the solution to the isentropic system~\eqref{wave1ord} will be essentially constant in the normal direction close to a solid wall due to the lack of boundary layers.
Thus, equation~\eqref{intBLMBc} is essentially an integral form of the isentropic mass conservation law~\eqref{wave1ord_mc} (recall that $p= c^2\rho$ under isentropic assumptions) in which the wall normal velocity~\eqref{BLvsW}  is a perturbation of the non-penetration condition $\bs n\cdot\bs U = 0$ with coefficients of $O(\delta_V + \delta_T)$. 

The form of equation~\eqref{intBLMBc} and expression~\eqref{BLvsW} suggest that boundary layer effects could be taken into account by simply solving the isentropic system~\eqref{wave1ord} and replacing the normal isentropic wall boundary condition $\bs n\cdot \bs U = 0$ with
\begin{equation}
\bs n\cdot\bs U =  -v_W =  \delta_V\frac{\i-1}{2}\nablaT\cdot\bs U
+\delta_T\frac{\omega(\gamma-1)(1+\i)}{2\gamma p_0}p
\qquad\text{at $y=0$,}
\end{equation}
where we have used that $\nablaT\cdot\bs U = \nablaT\cdot \bs u$.
We thus propose the system 
%
\begin{subequations}\label{1ordBLBC}
\begin{align}
\frac{\i\omega }{c^2}p + \nabla\cdot\rho_0\bs U &= 0	&&\text{for $y>0$},		\label{1ordBLBC_mc}		
\\
\i\omega\rho_0 \bs U  + \nabla p &= 0	&&\text{for $y>0$},
\label{1ordBLBC_momc}
\\
\bs n\cdot\bs U &=  \delta_V\frac{\i-1}{2}\nablaT\cdot\bs U
+ \delta_T\frac{\omega(\gamma-1)(1+\i)}{2\gamma p_0}p &&\text{at $y=0$.}		\label{1ordBLBC_bc}
\end{align}
\end{subequations}
as a model for acoustic wave propagation over a wall at $y=0$ where thermal and viscous boundary layers form.

Instead of working with the first-order system~\eqref{1ordBLBC}, we suggest, for two reasons, to rewrite it as a second-order equation in $p$.
First, for numerical purposes, the number of unknowns will then be reduced to one scalar variable that can be treated with standard finite elements.
Second, the mathematical analysis of the boundary-value problem appears more straight-forward in the second-order formulation.

To work out the boundary condition  for the second order formulation that corresponds to condition~\eqref{1ordBLBC_bc}, we start by evaluating the wall normal component and the tangential divergence of equation~\eqref{wave1ord_momc} at the limit values for the boundary-layer,
\begin{subequations}\label{1ord_momc_split}
\begin{align}
\i\omega\rho_0 \bs n\cdot\bs U ^\infty+ \pder {p^\infty}n1 &= 0,
\\
\i\omega\rho_0 \nablaT\cdot\bs U^\infty + \DeltaT p^\infty &= 0.
\end{align}
\end{subequations}
Eliminating the velocity from system~\eqref{1ordBLBC}, where expressions~\eqref{1ord_momc_split} are used for the boundary condition~\eqref{1ordBLBC_bc}, we obtain the following second-order alternative to~\eqref{1ordBLBC},
\begin{subequations}\label{Hhsystem1}
\begin{align}
-k_0^2p - \Delta p &= 0
&&\text{for $y>0$,}
\\
-\delta_V\frac{\i-1}2 \DeltaT p + \delta_T k_0^2\frac{(\i-1)(\gamma-1)}2 p + \pder pn1 &= 0 \label{Hhsystem1_bc}
&&\text{at $y=0$.}
\end{align}
\end{subequations}

The solid-wall boundary conditions~\eqref{1ordBLBC_bc} or~\eqref{Hhsystem1_bc} are derived under the assumption of a flat wall, which makes the splitting in tangential and normal directions particularly easy.
However, such a splitting can also be carried out in the case of a smooth non-flat surface. 
In that case, the normal field vector $\bs n$ can be extended into the inside of the domain in the vicinity of the wall using the definition
\begin{equation}
\bs n (\bs x) = \frac{\nabla d(\bs x)}{\lvert\nabla d(\bs x)\rvert},
\end{equation}
where  $d(\bs x)$  is the \emph{wall distance function}~\cite{KrPa99}.
Taking derivatives in the directions of this extended normal field, we can in the vicinity of the wall split the gradient and divergence operators in their normal and tangential parts, analogously as in the case of a flat wall,
\begin{subequations}\label{tangopt}
\begin{align}
\nabla T &= \nablaT T + \bs n\pder Tn1,
\\
\nabla\cdot\bs U &= \nablaT\cdot\bs u + \bs n \cdot\pder{\bs U}n1,
\end{align}
\end{subequations}
where $\bs u = \bs U - (\bs U\cdot\bs n)\bs n $ and the tangential operators simply are defined through the expressions above.
However, the splitting of the Laplacian operator, needed in the derivation of the boundary-layer equations,  is more complicated in the curved-wall case,
\begin{equation}
\Delta T = \DeltaT T  + \pder Tn2  + H\pder Tn1 ,
\end{equation}
involving an extra term with $H = \nablaT\cdot\bs n$,  the sum of the principal curvatures of the level surface to $d$ that passes through the point of interest.
However, in many practical situations with a smooth wall, it is reasonable to assume that the minimal radius of the principal curvatures is much larger than the boundary layer thicknesses, which is of the order of 20--400~\textmu{m} in the audio range.
We will therefore here apply boundary conditions~\eqref{1ordBLBC_bc} or~\eqref{Hhsystem1_bc} also for nonplanar smooth boundaries, interpreting the tangential operators as in definitions~\eqref{tangopt}.
Due to the likely scale separation between boundary-layer thickness and the wall's radii of curvature, we conjecture that taking wall curvature into account in the boundary conditions would constitute in many cases at most a second-order correction to the boundary conditions derived above.

\section{Example: a cavity with  lossy walls}

Here we exemplify the use of wall boundary condition~\eqref{Hhsystem1_bc} in the context of an acoustic cavity problem.
Let the domain of the cavity, conceptually illustrated in figure~\ref{cavitypicture}, be denoted $\Omega\subset\RR^3$.
The cavity boundary $\partial\Omega$ consists of  a solid-wall part $\Gamma_\text{w}$ and a part $\Gamma_\text{io}$  where  sound waves can enter and exit.
We assume that both boundary parts are smooth and, as discussed in \S~\ref{BLasBC}, that the radii of the principal curvatures of the surface $\Gamma_{\text w}$ everywhere is much larger than $\delta_V$ and $\delta_T$.
Possible edges and corners of the cavity can therefore only be located at the interfaces between $\Gamma_\text{io}$ and $\Gamma_\text{w}$. 

The acoustic pressure amplitude in the cavity is then modeled by the boundary-value problem
\begin{subequations}\label{cavitysystem}
\begin{alignat}{2}
-k_0^2p - \Delta p &= 0
&\qquad&\text{in $\Omega$,}								\label{cavitysystem_eq}
\\
\i k_0 p + \pder pn1 &= 2\i k_0 g
&&\text{on $\Gamma_\text{io}$.}
\label{cavitysystem_io}
\\
-\delta_V\frac{\i-1}2 \DeltaT p + \delta_T k_0^2\frac{(\i-1)(\gamma-1)}2 p + \pder pn1 &= 0 \label{cavitysystem_w}
&&\text{on $\Gamma_\text{w}$,}
\\
\nT\cdot\nablaT p &= 0
&&\text{on $\partial\Gamma_\text{w}$,}	\label{cavitysystem_wbc}
\end{alignat}
\end{subequations}
Boundary condition~\eqref{cavitysystem_io} is a simple radiation (or impedance) condition, in which function $g$ supplies an incoming acoustic wave and where outgoing planar waves are absorbed.
Since wall boundary condition~\eqref{cavitysystem_w}  in itself constitutes a diffusion problem on the bounded surface $\Gamma_\text{w}$, an extra condition---``a boundary condition to the boundary condition''---is needed to close the system. 
Note that if $\Gamma_{\text w}$ would constitute the whole boundary, no such condition would be needed.
Here we choose perhaps the simplest alternative, the homogeneous Neumann condition~\eqref{cavitysystem_wbc}, where $\nT$ denotes the outward-directed unit normal on the boundary of the surface $\Gamma_{w}$.
(Note that $\nT$ is directed in the  \emph{tangent} direction of $\Gamma_\text{w}$).
Condition~\eqref{cavitysystem_wbc} will constitute a ``natural condition'' in the variational form and the power balance derived below; the interface $\partial\Gamma_{\text w}$ will be transparent in both expressions.
This case is the one treated in the well-posedness analysis of Appendix~\ref{wellposed}.
It would be mathematically possible instead to specify a  Robin condition like $\alpha p + \nT\cdot\nablaT p  = 0$.
In that case, an integral over $\partial\Gamma_{\text w}$ would appear in the variational form and the power balance, indicating a sink or source of power at $\partial\Gamma_{\text w}$. 
However, it is not clear if this condition makes physical sense, and there is no analysis to guide the choice of coefficient $\alpha$.
Assigning a Dirichlet condition at $\partial\Gamma_{\text w}$ is problematic, however, since it likely will lead to a jump discontinuity towards the $\Gamma_\text{io}$ side of the boundary and thus sharp gradients locally around $\partial\Gamma_{\text w}$.
The well-posedness theory in Appendix~\ref{wellposed} is not easily extended to the Dirichlet case, and it is not clear for us whether the case is  amenable to analysis at all.

Regardless of these mathematical issues, note that the boundary-layer approximations considered here breaks down at interfaces such as $\Gamma_{\text w}$, as well as at sharp corners within $\Gamma_\text{w}$, so ultimately, how to handle such interfaces is a modeling issue that is an interesting subject for further studies. 
\begin{figure}\centering
\includegraphics{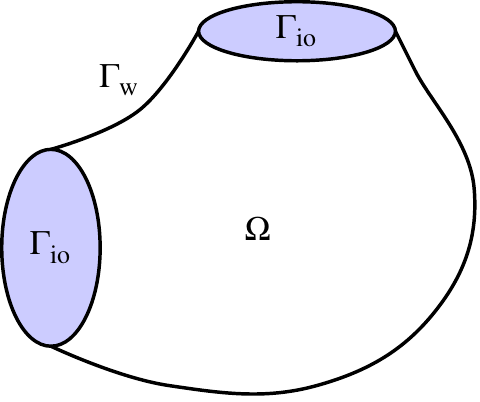}
\caption{Example cavity for problem~\eqref{cavitysystem}, viewed from the outside.
The interior is the domain $\Omega$.
Waves may enter and exit through the surfaces $\Gamma_\text{io}$, whereas the rest of the boundary $\Gamma_{\text w}$ consists of solid material, where thermal and viscous boundary layers form.
}\label{cavitypicture}
\end{figure}

Multiplying equation~\eqref{cavitysystem_eq} with a test function $q$, integrating by parts, and substituting boundary conditions~\eqref{cavitysystem_io}--\eqref{cavitysystem_wbc}, we find that solutions to the system~\eqref{cavitysystem} satisfies the variational expression
\begin{equation}\label{BLvarform}
\begin{aligned}
&-k_0^2\int_\Omega q p 
+ \int_\Omega\nabla q\cdot\nabla p
+ \i k_0 \int_{\Gamma_\text{io}} q p	
+ \delta_Tk_0^2\frac{(\i-1)(\gamma-1)}2\int_{\Gamma_\text{w}} q p
\\
&\qquad\qquad
+ \delta_V\frac{\i-1}2 \int_{\Gamma_\text{w}}\nablaT q\cdot\nablaT p
= 2\i k_0 \int_{\Gamma_\text{io}} q g.
\end{aligned}
\end{equation} 
\begin{remark}
In this section, as well as in Appendix~\ref{wellposed}, we do not explicitly specify measure symbols (such as $\textit{dV}$ or $\textit{dS}$, for instance) in the integrals, since the type of measure will be clear from the domain of integration.
\end{remark}

In Appendix~\ref{wellposed}, we define weak solutions  to system~\eqref{cavitysystem} using the variational form~\eqref{BLvarform} and show that the associated variational problem is well posed. 
Variational expression~\eqref{BLvarform} is also the starting point for finite-element discretizations, which can be carried out using the same standard elements as employed for the Helmholtz equation, that is, with  finite-element functions that are globally continuous and polynomials on each element of the triangulation.

Variational expression~\eqref{BLvarform} can also be used to derive a power balance law for system~\eqref{cavitysystem}, as follows.
Inserting the test function $q = \bar p$ (complex conjugate) in expression~\eqref{BLvarform}, we find that
\begin{equation}\label{e_VPp}
\int\limits_{\Omega}\lvert\nabla p\rvert^2 - k_0^2 \int\limits_\Omega \lvert p\rvert^2 
+ \i k_0\int\limits_{\mathclap{\Gamma_\text{io}}} \lvert p\rvert^2
+ \delta_Tk_0^2 \frac{(\i-1)(\gamma-1)}{2}\int\limits_{\Gamma_\text{w}} \lvert p\rvert^2
+ \delta_V\frac{\i-1}{2}\int\limits_{\Gamma_\text{w}}\lvert\nablaT p\rvert^2 = 2\i k_0 \int\limits_{\Gamma_\text{io}}\bar p g.
\end{equation}
The imaginary part of equation~\eqref{e_VPp} divided by $k_0$ is
\begin{equation}\label{e_impart}
\int\limits_{\mathclap{\Gamma_\text{io}}} \lvert p\rvert^2 
+ \delta_Tk_0\frac{\gamma-1}{2}\int\limits_{\Gamma_\text{w}} \lvert p\rvert^2
+ \frac{\delta_V}{2k_0}\int\limits_{\Gamma_\text{w}}\lvert\nablaT p\rvert^2
= 2\Re \int\limits_{\Gamma_\text{io}}\bar p g.
\end{equation}
Substituting identity
\begin{equation}
\lvert p - g\rvert^2 = \lvert p\rvert^2 + \lvert g \rvert^2 - 2\Re \bar p g,
\end{equation}
into equality~\eqref{e_impart} and dividing by $2\rho_0 c$, to obtain terms in units of power, we obtain 
\begin{equation}\label{e_powbal}
\frac1{2\rho_0c} \int\limits_{\Gamma_\text{io}} \lvert g\rvert^2
= \frac1{2\rho_0c}  \int\limits_{\Gamma_\text{io}} \lvert p - g\rvert^2
+ (\gamma-1)\frac{\delta_T\omega}{4\rho_0 c^2}\int\limits_{\Gamma_\text{w}}\lvert p\rvert^2
+ \frac{\delta_V}{4\omega\rho_0}  \int\limits_{\Gamma_\text{w}}\lvert\nablaT p\rvert^2.
\end{equation}
Expression~\eqref{e_powbal}  expresses that the incoming power equals the sum of the reflected power and the power losses due to the thermal and viscous boundary layers.

\section{Comparisons with classical results}

\subsection{Boundary-layer theory}\label{BLT}

Searby et al.~\cite{SeNiHaLa08} consider the problem of calculating the visco--thermal boundary-layer damping in acoustic cavities.
The authors review the boundary-layer theory and propose a two-step procedure in which a isentropic Helmholtz solver first calculates the pressure distribution on the solid surfaces.
In a second step, the total power loss from the visco-thermal boundary layers is calculated using boundary-layer theory.
The loss given by  Searby et al.~\cite[formulas~(6) and~(10)]{SeNiHaLa08}  agree with the two last terms in the power balance law~\eqref{e_powbal}.
Note, however,  that the procedure of Searby et al.\  does not predict, for instance, phase shift effects of the boundary layers, nor does it provide, as here, an explicit locally-reacting boundary condition, coupled to the interior problem.

Another approach is to consider a one-dimensional analysis, as presented by Rienstra \& Hirsch\-berg~\cite[\S~4.5]{RiHi15}, who analyze the thermal boundary layer in the case of a plane wave at normal incidence towards an isothermal wall and the viscous boundary layer in the case of a plane wave propagating parallel to the wall. 
The so-called \emph{displacement thicknesses} that their analysis yield are then applied to the case of a plane wave propagating in a wave guide. 
The final result is an expression for the complex wavenumber $k$ that governs the pressure amplitude, assumed to be of the form $p(z) = \e^{-\i k z}$, where the coordinate $z$ is along the axis of the wave guide, and in which the imaginary part of $k$ represents the visco--thermal damping. 
We will now show that if boundary-value problem~\eqref{cavitysystem} in its variational form~\eqref{BLvarform} is applied to such a case of a narrow waveguide, we obtain the same expression as Rienstra \& Hirsch\-berg. 

We consider the setup illustrated in figure~\ref{waveguidepicture}, where  $\Omega = S\times(0, \ell)$ is a cylindrical wave guide of length $\ell$.
The wave guide's cross section $S$ is fixed, of area $|S|$, and has a smooth boundary $\partial S$ of circumference $|\partial S|$. 
In order for the boundary-layer approximations to be valid, we assume that $\delta_V$ and $\delta_T$ are small compared to $\sqrt{|S|}$, and that $\sqrt{|S|}$ is of the same order as $|\partial S|$, precluding overly flattened geometries. 
Otherwise, the shape of $S$ can be arbitrary.
Let the waveguide be oriented along the $z$-axis and let $\Gamma_\text{io}$ be the two cross section planes located at $x=0$ and $x=\ell$, respectively.
Furthermore, let $g(x,0) = g_0$ and $g(x,\ell) = 0$ for a given number $g_0$.
According to boundary-layer theory and since the driving signal $g_0$ is constant at the inlet, we may make the ansatz that the pressure field is constant in each cross section, that is, that $p = p(z)$.
\begin{remark}
The assumption that $p = p(z)$ also constitutes a \emph{Galerkin approximation} of variational form~\eqref{BLvarform} such that $p$ and $q$ are constant over each cross section.
\end{remark}

Under this approximation, $\nabla p = \nablaT p = \bs e_z p'$, where $\bs e_z$ is a unit vector in the $z$ direction, and variational form~\eqref{BLvarform} reduces to 
\begin{equation}\label{BL1Dvarform}
\begin{aligned}
&\int_0^{\ell} q'\lb\lvert S\rvert + \delta _V\frac{\i-1}2\lvert\partial S\rvert\rb p'\,\textit{dz}
-k_0^2\int_0^{\ell} q\lb \lvert S\rvert - \delta_T\frac{\i - 1}2(\gamma-1) \lvert\partial S\rvert\rb p\, \textit{dz}
\\
&\qquad + \i k_0 \lvert S\rvert\bigl[ q(\ell)p(\ell) + q(0)p(0)\bigr] = 2\i k_0 q(0) g_0.
\end{aligned}
\end{equation}
Variational expression~\eqref{BL1Dvarform} holds for each test functions such that itself and its derivative is square integrable. 
In particular, for smooth test functions that vanish at $x = 0$ and $x = \ell$ (that is,  $q\in C_0^\infty(0,\ell)$), variational expression~\eqref{BL1Dvarform} reduces to
\begin{equation}
\int_0^{\ell} q'\lb\lvert S\rvert + \delta _V\frac{\i-1}2\lvert\partial S\rvert\rb p'\,\textit{dz}
-k_0^2\int_0^{\ell} q\lb \lvert S\rvert - \delta_T\frac{\i - 1}2(\gamma-1) \lvert\partial S\rvert\rb p\, \textit{dz} = 0.
\end{equation}
Integration by parts yield that 
\begin{equation}\label{wguideODE}
-\lb\lvert S\rvert + \delta _V\frac{\i-1}2\lvert\partial S\rvert\rb p'' -  k_0^2\lb\lvert S\rvert - \delta_T\frac{\i - 1}2(\gamma-1) \lvert\partial S\rvert\rb p = 0
\end{equation}
in $(0,\ell)$. 
The assumed boundary conditions imply that solutions of equation~\eqref{wguideODE} are of the form $p(z) = A e^{-i k z}$, which after substitution into equation~\eqref{wguideODE} implies that
\begin{equation}\label{k2BL}
k^2 = k_0^2\frac{2\lvert S\rvert - \delta_T(\i - 1)(\gamma-1) \lvert\partial S\rvert}{2\lvert S\rvert + \delta _V(\i-1)\lvert\partial S\rvert},
\end{equation} 
which is the same expression as obtained by Rienstra \& Hirsch\-berg~\cite[\S~4.5.3]{RiHi15}.
Note that expression~\eqref{k2BL} reveals that in order for $k$ to differ considerably from $k_0$, the surface-area to air-volume ratio $\lvert \partial S\rvert/\lvert S\rvert$ should be large.

\begin{figure}\centering
\includegraphics{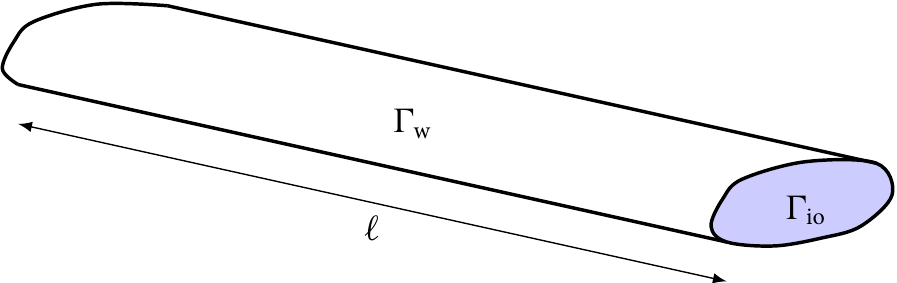}
\caption{Example of a cylindrical wave guide of the type considered in \S~\ref{BLT}}
\label{waveguidepicture}
\end{figure}

\subsection{Solutions for special geometries}

Instead of relying on boundary-layer analysis, it is possible to obtain exact or approximate solutions to the linearized Navier--Stokes equations for a few special geometries.
Keefe~\cite{keefe:84}, among others,  presents results based on Kirchhoff's classical solution for propagation of the first mode inside an isothermal cylinder with a circular cross section. 
The analysis is made in terms of the pressure and volume velocity over cross sections in the cylinder.
The wavelength is assumed to be sufficiently long for only the fundamental mode to propagate.
Note that, as opposed to what was the case in \S~\ref{BLT}, no assumption is made regarding the ratio between the tube radius and the boundary layer thickness.
For a circular tube of radius $a$, expression~(9) given by Keefe~\cite{keefe:84} implies the complex wavenumber 
\begin{equation}\label{eq:keefes_k2}
k^2 = k_0^2 \frac{1 + (\gamma-1)F_t}{1-F_v},
\end{equation}
where
\begin{equation}\label{FvFt}
\begin{aligned}
F_v &= \frac{2}{r_v\sqrt{-\i}} \frac{J_1(r_v\sqrt{-\i})}{J_0(r_v\sqrt{-\i})} ,\\
F_t &= \frac{2}{r_t\sqrt{-\i}} \frac{J_1(r_t\sqrt{-\i})}{J_0(r_t\sqrt{-\i})} ,
\end{aligned}
\end{equation}
in which $J_0$ and $J_1$ are the Bessel functions of order 0 and 1 and
\begin{align}
r_v &= a \sqrt{\omega/\nu}=
\sqrt{2}a/\delta_V,
\\
r_t &= r_v\sqrt{\nu \rho_0 c_p/\kappa} =
\sqrt{2} a/\delta_T.
\end{align}

We will now show that expression~\eqref{eq:keefes_k2} approaches the boundary layer expression~\eqref{k2BL} in the large radius or high frequency limit.
Such an analysis requires consideration of the behavior of the Bessel functions in expressions~\eqref{FvFt} in the limit of large arguments~\cite[Eq.~9.2.1]{AbSte72}. 
For the Bessel functions of a complex argument, asymptotic
expansions are meaningful only for a constant phase angle. 
For such a complex argument $w$, $-\pi<\arg(w)<0$,
\begin{equation}\label{J0expansion}
J_0(w) = \frac{1}{\sqrt{2\pi w}}e^{\i(w-\pi/4)} + \mathcal{O}(1/w)
\end{equation}
and
\begin{equation}\label{J1expansion}
J_1(w) = \frac{1}{\sqrt{2\pi w}}e^{\i(w-3\pi/4)} + \mathcal{O}(1/w).
\end{equation}
We can thus use the large-radius/high-frequency approximation
\begin{equation}
\frac{J_1(w)}{J_0(w)}\approx e^{-\i \pi/2} = -\i
\end{equation}
and find that
\begin{equation}
\begin{aligned}
F_v &\approx (1-i)\delta_V/a,\\
F_t &\approx (1-i)\delta_T/a,
\end{aligned}
\end{equation}
whence
\begin{equation}
\begin{split}
k^2 &= k_0^2 \frac{1 + (\gamma-1)F_t}{1-F_v} \approx
k_0^2\frac{1+(\gamma-1)(1-i)\delta_T/a}{1-(1-i)\delta_V/a} =\\
&= k_0^2\frac{2\pi a^2 + (\gamma-1)(1-i)2\pi a\delta_T}{2\pi a^2 - (1-i)2\pi a\delta_V} =
k_0^2\frac{2|S|+(\gamma-1)(1-i)|\partial S|\delta_T}{2|S|-(1-i)|\partial S|\delta_V},
\end{split}
\end{equation}
which means that expression~\eqref{eq:keefes_k2} yields in the large-radius/high-frequency limit the same expression~\eqref{k2BL} as when using the boundary-layer approximations. 

\begin{table}\centering
\caption{Air properties used for evaluation of damping models}\label{AirProp}
\begin{tabular}{lll}
\hline
Density 						  & $\rho_0$	&	$1.204$ $\text{kg}\cdot\textrm{m}^{-3}$	\\
Kinematic viscosity 		& $\nu$		 &    $1.506\cdot10^{-5}$ $\text{m}^2\cdot\text{s}^{-1}$  \\
Speed of sound 			&$c_0$		  &		$343.20$ $\text{m}\cdot\text{s}^{-1}$						\\
Prandtl number 				&$N_\text{Pr}$		&	$0.708$							\\
Specific heat, constant pressure				&  $c_P$					&$1.0054\times10^3$	J$\cdot\text{kg}^{-1}\cdot\text{K}^{-1}$	\\
Ration of specific heats			& $\gamma$					& $1.4$								\\
\hline
\end{tabular}
\bigskip

\includegraphics{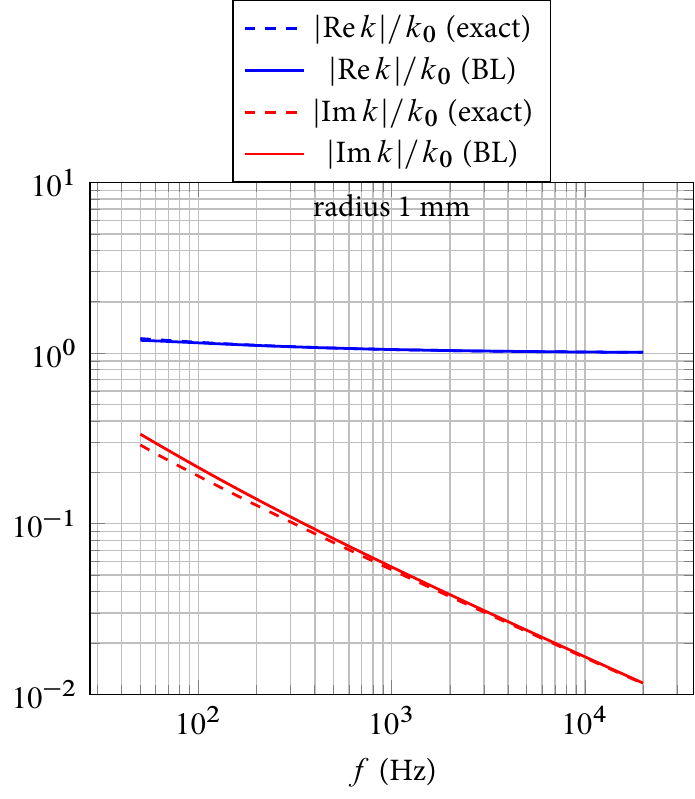}\includegraphics{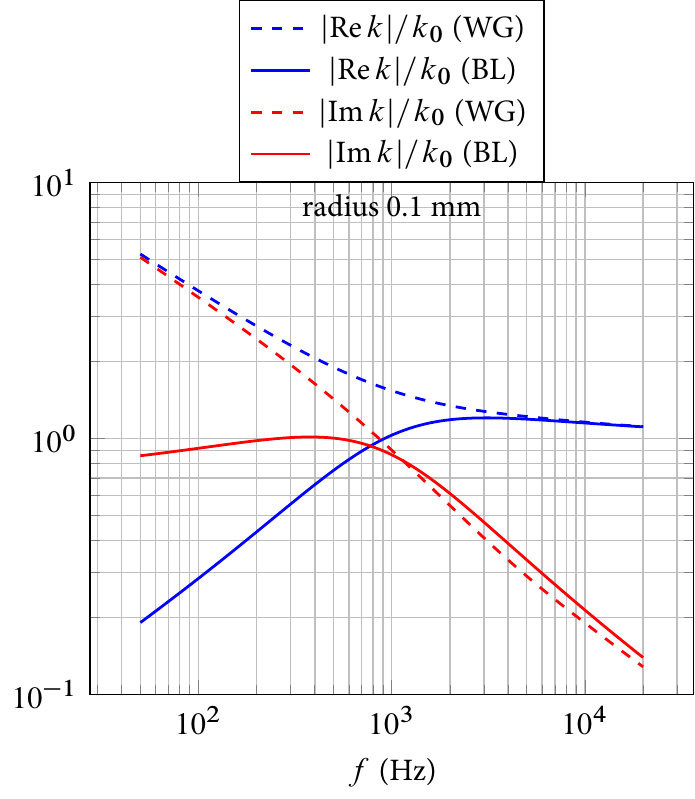}
\captionof{figure}{The real (blue) and imaginary (red) part of the relative wavenumber for lossy propagation in circular tubes of radii  1~mm (left) and 0.1~mm (right).
Dashed: ``exact'' wavenumber~\eqref{eq:keefes_k2} according to Keefe~\cite{keefe:84}.
Solid: boundary layer approximation~\eqref{k2BL}.
Air property parameters as in table~\ref{AirProp}.}\label{k_compare}

\end{table}

\subsection{Limits of applicability}

The dispersion relation~\eqref{k2BL} was derived from our approach of taking visco--thermal losses into account through boundary condition~\eqref{cavitysystem_w}.
Corresponding expression~\eqref{eq:keefes_k2} holds only for circular cross sections, but it holds also for tubes that are so narrow that the boundary-layer analysis ceases to be valid.
By comparing expressions~\eqref{k2BL} and~\eqref{eq:keefes_k2} for typical parameter values, it will therefore be possible to assess limits of applicability of our boundary-condition approach.

Figure~\ref{k_compare} shows two examples, for tubes of radii 1~mm and 0.1~mm, respectively, of applying the ``exact'' formula~\eqref{eq:keefes_k2} and the boundary-layer approximation~\eqref{k2BL}.
For the 1~mm tube, the agreement is almost perfect, whereas the plot for the 0.1~mm tube reveals that the asymptotics for low frequencies differ, so that the boundary-layer approximation will be inaccurate  for frequencies below, say, 2~kHz.
Thus, the boundary-layer approximations do break down, starting at low frequencies, for narrow enough tubes.
However, capillary tubes need to be very long and narrow in order for these inaccuracies to be noticeable. 

\section{Numerical case study: a generic compression driver}\label{sec:driver} 

This section attempts to demonstrate the usefulness of the above model in practical numerical computations. 
As have been alluded to in \S~\ref{BLT}, acoustic devices for which visco--thermal boundary-layer losses need to be taken into account are typically characterized by a large ratio of the area of solid surfaces to the air volume. 
One example of such a device is the \emph{compression driver}.
Due to the demands of high acoustic power, such drivers are commonly used in public address systems to feed mid-to-high-frequency horns.
In a compression driver, a stiff vibrating membrane is placed in a very shallow chamber, from which the sound exits  typically through a number of narrow slits in the radial or circumferential directions.
The quotient of the membrane area to the sum of the slits' area constitutes the \emph{compression ratio} of the driver.
A high compression ratio  improves the acoustic loading of the membrane, particularly at higher frequencies, and substantially increases the efficiency of the driver.  
A so-called \emph{phase plug} collects the acoustic output from the slits of the compression chamber and expands it to a circular waveguide, on which the throat of the horn will be mounted. 
The design of the slits and the phase plug is delicate in order to minimize the effects of internal resonances~\cite{Oclee-Brown2012}.
The presence of a shallow chamber and several thin slits
means that visco--thermal boundary-layer losses are potentially important to account for in a numerical simulation of a typical compression driver. 
However, as we will see, even a hybrid strategy, where the full Navier--Stokes equations are solved only in the narrow passages of the domain, whereas the pressure Helmholtz equation is used for the rest of the system, tends to lead to large problems and simulations which are expensive in terms of memory and CPU time.

Here, for the generic 3-inch compression driver design shown in figure~\ref{driver}, we compare such a hybrid strategy to a strategy where  the boundary losses are modeled by the proposed boundary condition \eqref{cavitysystem_w}.
The membrane diameter is 84~mm, the depth of the compression chamber is 0.5~mm, and the area of each of the 9 radial slits is 51~mm\textsuperscript2, which yields a compression ratio of 12.
The length of the phase plug is 25~mm and the diameter of the final wave guide is 38~mm.
The geometry of this driver is much simplified compared to  actual commercially available devices, but the dimensions above are representative for real-life drivers~\cite{Oclee-Brown2012}.
The air properties of table~\ref{AirProp} are used also here.

With the hybrid strategy, the Navier--Stokes equations~\eqref{linNS} are used in the compression chamber and the phase plug, and these equations are coupled to Helmholtz equation \eqref{cavitysystem_eq} in the waveguide.
The sound-hard walls are modeled with isothermal and no-slip boundary conditions in the Navier--Stokes region and with a homogeneous Neumann condition in the Helmholtz region.
Finite-element approximations of these equations are provided in the Acoustics Module (``Thermoviscous Acoustics, Frequency Domain'') of Comsol Multiphysics, which we use for the numerical experiments.
The use of a hybrid strategy is strongly recommended by the software provider to reduce the computationally cost, compared to using the Navier--Stokes equations everywhere~\cite{Co17AM}.
For the alternative strategy derived in this article, Helmholtz equation \eqref{cavitysystem_eq} is used throughout the whole domain, and with boundary conditions \eqref{cavitysystem_w}--\eqref{cavitysystem_wbc} on the walls of the compression chamber and phase plug. 
We implement these equations also in Comsol Multiphysics, using the software's so-called Weak Form PDE Interface, in which variational forms like expression~\eqref{varprob} directly can be specified~\cite[Ch.~16]{Co17Ref}.
A homogeneous Neumann bundary condition is used in the waveguide, as in the hybrid strategy.
A lossless model, with the homogeneous Neumann boundary condition on all walls, is used for comparison.
The membrane is regarded as a rigid piston oscillating with a fixed amplitude.
There are nine slits in the phase plug, and thus, due to its symmetry, it suffices to consider a $20^\circ$ segment with appropriate symmetry boundary conditions.

In all three models, the waveguide is terminated at a planar cross section $\Gamma_\text{out}$ supplied with boundary condition \eqref{cavitysystem_io} (with $g\equiv 0$) to model an infinite waveguide.
This condition, equivalent to imposing the acoustic impedance condition $Z_0 = \rho_0 c$, absorbs plane waves, and the acoustic power exiting the waveguide may then simply be computed as
\begin{equation}
P_\text{o} = \frac12\Re\int_{\Gamma_\text{out}} \bs n\cdot\bar{\bs u} p = \frac{1}{2\rho_0 c} \int_{\Gamma_\text{out}} |p|^2.
\label{eq:driver_Po}
\end{equation}
Note that non-planar modes does not propagate for the frequencies considered. 
At $10$~kHz, at the upper end of the frequency interval, the first non-planar mode decays with a factor of about $10^{-4}$ over the length of the waveguide, and therefore contributes a smaller error than the discretization.

\begin{figure}
\centering
\includegraphics[scale = 0.5]{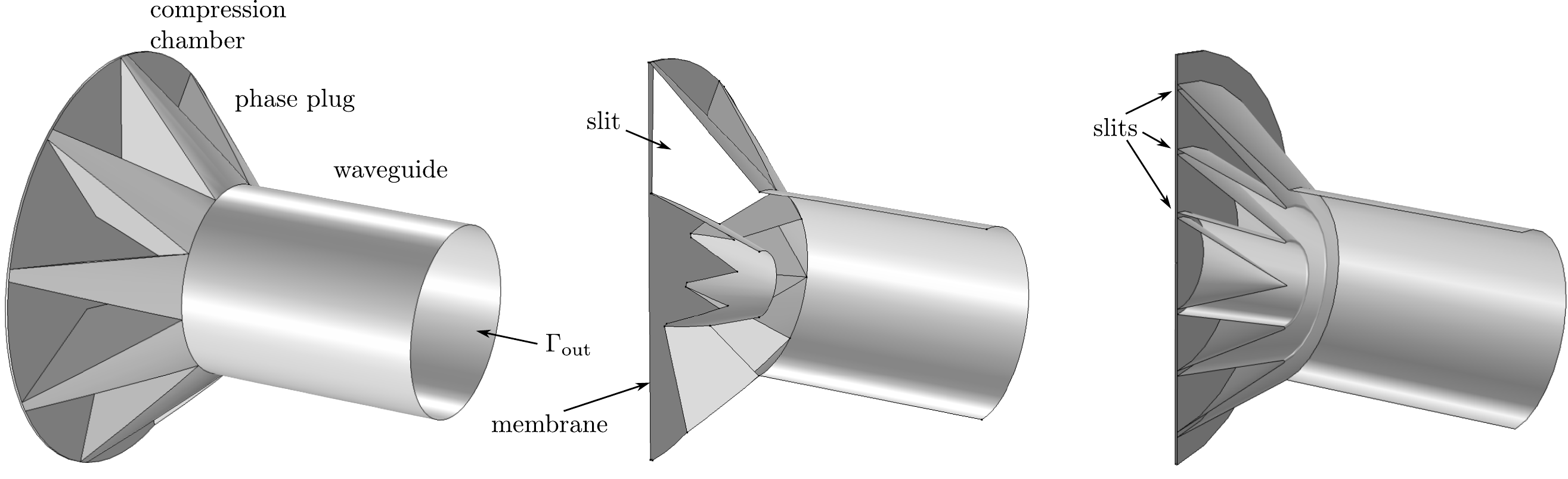}
\caption{
Left and middle: The generic compression driver geometry used to compare the  hybrid Navier--Stokes/Helmholtz strategy with our boundary-approximation approach. 
The circular membrane and the thin compression chamber is to the left, the phase plug in the middle consists of nine radial slits that expand into a circular wave guide.
Right: The rotationally symmetric compression driver geometry used to tune the mesh.
}\label{driver}
\centering
\includegraphics[scale = 1]{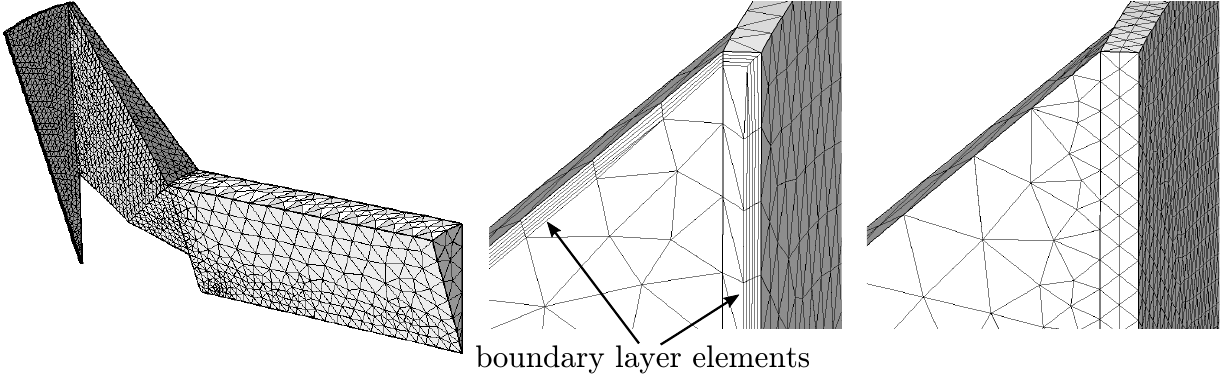}
\caption{Left: the mesh used in the hybrid strategy solution. Middle: close up of a cut through a part of the chamber and phase plug, where the boundary layer elements can be seen. Right: the same close up, but showing the mesh used in the boundary approximation solution. Note that this mesh, apart from the absence of boundary layer elements, is very similar to the mesh used for the hybrid strategy solution.
}\label{driver_mesh}
\end{figure}

%
%
\begin{figure}\centering
\includegraphics{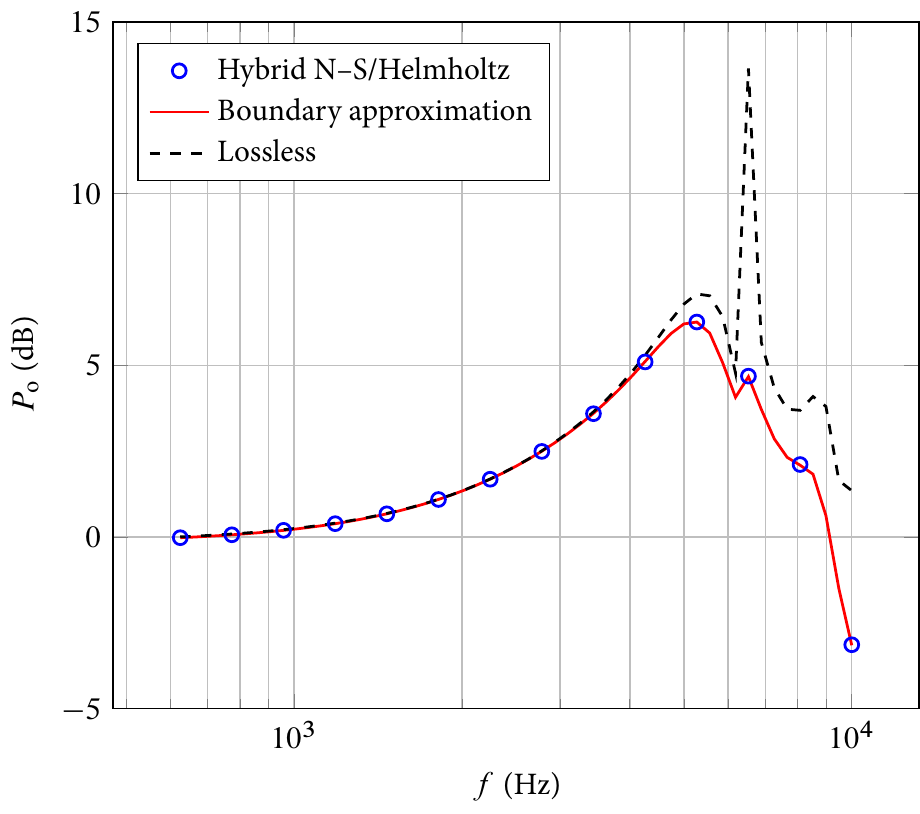}
\hfill
\caption{Output power $P_\text{o}$ for the compression driver in figure~\ref{driver}, computed with the Navier--Stokes/Helmholtz hybrid strategy, our boundary approximations, and when ignoring visco--thermal losses.
The 0-dB reference power is the output power at $f=625$~Hz.
}\label{Pout}

\bigskip
\captionof{table}{
Details from the Comsol log-files for the simulations of the compression driver in figure~\ref{driver}, on meshes tuned to give a relative accuracy of around $0.01$.
\\$^\ast$The solution time is the total wall clock time excluding I/O. 
Two computer cores were used for the boundary-approximation solution, while all 24 available cores were used for the hybrid solution.
}\label{table:driver}  
\bigskip
\begin{tabular}{l|rrr} 
& Degrees of freedom & Memory used & Solution time$^\ast$ \\ 
& & & per frequency \\ \hline
Hybrid strategy &  1 033 276 &  101 613 MB & 2 111  s  \\ 
Boundary approximation & 63 725 & 1 242 MB & 12 s \\ 
Quotient Hybrid/Boundary & 16.21 & 81.8 & 180 \\ 
\end{tabular}
\end{figure}

A good resolution of the rapidly varying velocity and temperature fields in the very thin boundary layers of the Navier--Stokes model requires a very fine mesh near the solid walls.
(The pressure, however, is essentially constant across the boundary layer.)
Anisotropic boundary-layer elements that are elongated in the tangential direction of the wall reduces the number of elements, compared to a uniform refinement, but the presence of the boundary layer will nevertheless lead to a large number of degrees of freedom.
We use quadratic elements for the acoustic pressure in all models, and cubic elements for the velocity components and the temperature in the solution of Navier--Stokes equations, relying on Comsol's choice of Taylor--Hood-like element orders.
To tune the mesh and estimate the accuracy of the solutions, we considered first a rotationally symmetric compression driver design in which the slits are annular rather than radial; see the right part of figure~\ref{driver}. 
In this reference design there are three slits, placed according to Smith's guidelines~\cite{Smith1953}, as further described in Ref.~\cite{Oclee-Brown2012}, and the compression ratio and other physical dimensions are identical to the design with the radial slits.
This geometry possesses axial symmetry and allows thus a 2D solution with an extremely fine mesh to be computed and used as a reference.
The chosen mesh settings for the 3D model are as follows. 
The maximum element length in interior of the model is $3.43 $~mm, which corresponds to one tenth of the wavelength at $10$~kHz. 
At the boundaries of the compression chamber and phase plug, there are six layers of elongated boundary elements, each with thickness $36.6$~\textmu m and maximum length $1.27 $~mm tangential to the boundary.
The boundary layer element thickness is chosen to resolve the viscous and thermal boundary layer effects, and should be compared to the viscous and thermal acoustic boundary-layer thicknesses $\delta_V$ and $\delta_T$.
These are $\delta_V = 21.9 $~\textmu m ($87.6$~\textmu m) and $\delta_T = 26.0$~\textmu m $(104$~\textmu m) at $10$~kHz ($625$~Hz), which is the upper (lower) end of the frequency interval considered.
Recall that the velocity and temperature, which are changing rapidly close to the boundaries, are discretized by polynomials of degree three.
The maximum relative difference in output power $P_\text{o}$ between the fine mesh 2D model and a $20^\circ$-segment 3D model on the chosen mesh is $\approx 0.01$.
The same 3D mesh settings are then used for the design to the left in figure~\ref{driver}, resulting in the mesh in figure~\ref{driver_mesh}, and it seems reasonable to expect an accuracy in the same order $\approx 0.01$ as for the axially-symmetric case.
A similar procedure is used to tune a coarser mesh, without elongated boundary layer elements, to use in our boundary approximation model.
Also in this case the mesh is chosen so that the estimated accuracy for the output power is $\approx 0.01$, to obtain a fair comparison between the hybrid and boundary approximation strategies.
In this model, the maximum element length is $0.40$~mm in the compression chamber, $1.27 $~mm at the boundaries of the phase plug, and $3.43$~mm in the interior of the phase plug and waveguide.

The results can be found in figure~\ref{Pout}.
First of all, we note that the output power is radically different when boundary effects are ignored, especially around the resonance at about 6.5~kHz. 
Moreover, the agreement between the hybrid strategy and our proposed boundary-approximation model is very good: the maximum relative difference in output power $P_\text{o}$ is less than $0.01$, and thus in the same order as the estimated relative numerical accuracy. 
Both the CPU time and memory consumption are radically lower for the boundary approximation model; the details can be found in table~\ref{table:driver}.

\section{Discussion}

Our proposed approach to account for visco--thermal boundary losses takes the form of the Wentzell condition~\eqref{Hhsystem1_bc} for the acoustic pressure.
The acoustical effects on both the amplitude and the phase of the thin viscous and thermal boundary layers,  which are in the order of 20--400~\textmu{m} in the audio range, are taken into account by this boundary condition. 
Thus, these layers do not need to be explicitly resolved by the mesh in numerical simulations, and the computational cost becomes essentially the same as for the lossless case. 
Therefore, we believe that this approach is a very attractive alternative to full or hybrid solutions to the linearized Navier--Stokes equations for many---maybe most---cases when visco--thermal boundary losses need to be taken into account.

The boundary-layer approximations break down for extremely thin capillaries or slits. 
However, as can be seen from figure~\ref{k_compare}, the break down starts at long wavelengths, which means that errors in our model will only be noticeable for devices with extreme geometries, such as very long submillimeter capillaries. 
We also saw in the numerical experiments of \S~\ref{sec:driver} that in the frequency range where the boundary-layer approximation becomes questionable, the losses were anyway very small.

The boundary-layer analysis presented here assumes that the radii of the principal curvatures of the wall is much larger than the boundary-layer thickness.
In particular, the effects of corners and edges along solid walls are not accounted for. 
Supported by the numerical experiments in \S~\ref{sec:driver}, it seems reasonable to assume that the effects of such geometric features may often only be of second order at low amplitudes.
The situation may be quite different, however, if the geometry includes a very large number of such features, or if significant wall roughness or microscale patterns in the order of the boundary-layer thickness are present.
The proposed boundary condition will likely not be a good model for such situations.
Out of scope for the present investigation is also nonlinear effects such as flow separation at edges, which will become significant at high amplitudes.

\clearpage
\appendix
\appendixpage

\section{ Boundary-layer equations}\label{appBLeq}

Here we outline the procedure to obtain the boundary layer approximation of system~\eqref{linNS} in the vicinity of a wall at $y=0$ on which boundary conditions~\eqref{NSBC} should be imposed.
We use the classical Prandtl approach of rescaling the equations and keeping only leading terms~\cite{Schl87}.
The analysis here is inspired by and constitutes a generalization of the 1D analysis presented by Rienstra \& Hirchberger~\cite[\S~4.5]{RiHi15}.
The first step is to split in the wall normal and tangential directions  all relevant quantities: the position vector; the velocity field; the nabla and the Laplacian operators; and the momentum balance equation~\eqref{linMomC}.
The purpose of this splitting is to prepare a rescaling of the variables in the different directions.

The splitting of the spatial points $\bs x$ and the velocity vector $\bs U$ in their tangential 
and wall normal parts 
will be denoted
\begin{equation}
\begin{aligned}
\text{$\bs x = (\bs r, y)$,\quad where\quad $\bs r = (x, z)$ is tangential and $y$ normal,} 
\\
\text{$\bs U= (\bs u, v)\,$,\quad where\quad $\bs u = (u, w)$ is tangential and $v$ normal.}
\\
\end{aligned}
\end{equation}
Likewise, corresponding splitting of the nabla and Laplacian operators will be written
\begin{equation}
\begin{aligned}
\nabla &= \lp\nablaT, \pder{}y1\rp, \quad\text{where}\quad   \nablaT = \lp\pder{}x1, \pder{}z1\rp,
\\
\Delta &= \DeltaT + \pder{}y2, \quad\text{where}\quad\DeltaT = \nablaT\cdot\nablaT.
\end{aligned}
\end{equation} 
With these definitions, system~\eqref{linNS} can now be split in normal and tangential parts with respect to the wall,
\begin{subequations}\label{splitlinNS}
\begin{align}
\i\omega\rho + \rho_0\lp\nablaT\cdot\bs u + \pder vy1\rp &= 0,				\label{splitlinMassC}
\\
\i\omega\bs u  + \frac1{\rho_0}\nablaT p 
- \nu\lp\DeltaT\bs u + \pder{\bs u}y2
+ \frac13\nablaT\lp\nablaT\cdot\bs u + \pder vy1\rp\rp 
&= \bs 0,								\label{splitlinMomCT}
\\
\i\omega v + \frac1{\rho_0}\pder py1
-  \nu\lp\DeltaT v +  \pder vy2 + \frac13\pder{}y1\lp\nablaT\cdot\bs u + \pder vy1\rp\rp 		&= 0,	\label{splitlinMomCN}
\\
\i\omega\rho_0 c_V T + p_0\lp\nablaT\cdot\bs u + \pder vy1\rp - \kappa\lp\DeltaT T + \pder Ty2\rp&= 0,
\label{splitlinEnerC}
\end{align}
\end{subequations}
Note that the single vector momentum equation~\eqref{linMomC} has been split to yield the two equations~\eqref{splitlinMomCT} and~\eqref{splitlinMomCN}.

Now we introduce the length scale $\tlength = c/\omega = 1/k_0$, where $k_0$ is the isentropic  wavenumber, in the tangential direction; the length scale $\delta$, to be determined later, in the wall-normal direction; and we nondimensionalize  coordinates and functions as follows:
\begin{equation}\label{scalings}
\begin{aligned}
\bs r' &= \frac{\bs r}\tlength,		
&\qquad
y ' &= \frac y\delta,
\\
\bs u'(\bs r', y') &= \frac1{\tlength\omega}{\bs u(\tlength \bs r', \delta y')},
&\qquad
v'(\bs r', y') &= \frac1{\delta\omega} v(\tlength \bs r', \delta y'),
\\
p'(\bs r', y') &= \frac1{\rho_0\tlength^2\omega^2} p(\tlength \bs r', \delta y'),
&\qquad
\rho'(\bs r', y') &= \frac1{\rho_0}\rho(\tlength \bs r', \delta y'),
\\
T'(\bs r', y') &= \frac{\rho_0 c_V}{p_o} T(\tlength \bs r', \delta y')).
\end{aligned}
\end{equation}
Nondimensionalization by rewriting equations~\eqref{splitlinMomCT} and~\eqref{splitlinMomCN} in these variables yields
\begin{subequations}\label{scalesplit}
\begin{align}
\i \rho'  +  \nablaT'\cdot\bs u' + \pder{v'}{y'}1  &= 0,							\label{scalesplitlinMassC}
\\
\i \bs u'  + \nablaT' p'  - \frac{\nu'}{\delta'^{2}}\lp\delta'^2\DeltaT' \bs u'  
+ \pder{u'}{y'}2 + \delta'^2 \frac13\nablaT'\lp\nablaT'\cdot\bs u' + \pder{v'}{y'}1\rp\rp &= \bs 0,
\label{scalesplitlinMomCT}
\\	
\delta'^2\i v' + \pder{p'}{y'}1 - \nu'\lp\delta'^2\DeltaT' v' 
+ \pder{v'}{y'}2 + \frac13\nablaT'\lp\nablaT'\cdot\bs u' + \pder{v'}{y'}1\rp\rp &= 0,
\label{scalesplitlinMomCN}
\\
\i T´ + \nablaT'\cdot\bs u' + \pder{v'}{y'}1 - \frac{\kappa'}{\delta'^2}\lp \delta'^2\DeltaT' T' + \pder{T'}{y'}2\rp &= 0,
\label{scalesplitlinEnerC}
\end{align}
\end{subequations}
where
\begin{equation}
\delta' = \frac\delta\tlength,
\qquad
\nu' = \frac{\nu}{\omega\tlength^2},
\qquad
\kappa' = \frac{\kappa}{\tlength^2\omega\rho_0 c_V}  = \frac{\gamma}{N_\text{Pr}}\nu'
\end{equation}
are the nondimensional vertical scaling, and the nondimensional viscosity and conductivity coefficients, respectively, and
where the last equality follows from the definition of the Prandtl number
\begin{equation}
N_\text{Pr} = c_P\rho_0 \frac{\nu}{\kappa} = \gamma \frac{\nu'}{\kappa'}.
\end{equation}

Substituting the small parameter $\epsilon=\nu' = \delta'^2$ into system~\eqref{scalesplit} (this will fix the vertical length scale) and ignoring $O(\epsilon)$ terms, the system reduces to
\begin{subequations}
\begin{align}
\i \rho'  +  \nablaT'\cdot\bs u' + \pder{v'}{y'}1  &= 0,							
\\
\i \bs u'  + \nablaT' p'  - \pder{u'}{y'}2  &= \bs 0,
\\	
\pder{p'}{y'}1  &= 0,
\\
\i T´ + \nablaT'\cdot\bs u' + \pder{v'}{y'}1 - \frac{\gamma}{N_\text{Pr}} \pder{T'}{y'}2&= 0,
\end{align}
\end{subequations}
which after transforming back to dimensional coordinates becomes
\begin{subequations}\label{BLEsystem}
\begin{align}
\i\omega\frac{\rho}{\rho_0} + \nablaT\cdot\bs u + \pder vy1 &= 0,			\label{BLEmc}
\\
\i\omega\bs u + \frac1{\rho_0}\nablaT p - \nu\pder{\bs u}y2 &= \bs 0,		\label{BLEumom}
\\
\pder py1 &= 0,																			\label{BLEvmom}
\\
\i\omega\rho_0 c_V T + p_0\lp\nablaT\cdot\bs u + \pder vy1\rp - \kappa\ \pder Ty2 &= 0.
\label{BLEenergy}
\end{align}
\end{subequations}
These are the acoustic boundary-layer equations, which, for small values of $\omega\nu $ (corresponding to small $\nu'$)  and for $N_\text{Pr} = O(1)$, approximate system~\eqref{splitlinNS}  in the vicinity of the wall;
that is, for wall-normal length scales of order $\sqrt{\nu/\omega}$ (corresponding to $\delta' = \sqrt{\nu'}$).
Recall that for air, as for many other gases, the Prandtl number is about unity.

Outside of a thin layer close to the wall,  the terms involving the viscosity and thermal conductivity parameter become small, and it becomes reasonable to replace system~\eqref{BLEsystem} with the isentropic  equations~\eqref{wave1ord}.  

System~\eqref{BLEsystem} can be solved exactly, subject to boundary conditions~\eqref{NSBC} and matching conditions for all variables as $y\to+\infty$ to an exterior isentropic solution to system~\eqref{wave1ord} evaluated at a  position $y = y_\delta$ close to the wall but outside the boundary layer.
We will denote these limiting values by $\bs u^\infty(\bs r)$, $p^\infty(\bs r)$, $\rho^\infty(\bs r)$ and so on.

Equation~\eqref{BLEvmom} simply implies that the pressure is constant in the vertical direction across the boundary layer,
\begin{equation}\label{pconst}
p(\bs r, y) = p^\infty(\bs r) 
\qquad\forall y>0.
\end{equation}
Moreover, the tangential components of equation~\eqref{wave1ord} evaluated at $y=y_\delta$ are
\begin{equation}\label{upinfeq}
\i\omega\rho_0 \bs u^\infty + \nablaT p^\infty = 0.
\end{equation}
Substituting equations~\eqref{pconst} and~\eqref{upinfeq} into equation~\eqref{BLEumom}, we can formulate the boundary-value problem 
%
\begin{subequations}
\begin{align}
\i\omega\bs u - \nu\pder{\bs u}y2 - \i\omega\bs u^\infty &= \bs 0,
\\
\bs u|_{y=0}  &= \bs 0,
\\
\lim_{y\to+\infty}\bs u &= \bs u^\infty,				
\end{align}
\end{subequations}
which has the solution
\begin{equation}
\bs u = \bs u^\infty(\bs r)\lp 1 - \e^{-(1+\i)y/\delta_V}\rp,
\end{equation}
where
\begin{equation}
\delta_V = \sqrt{\frac{2\nu}{\omega}},
\end{equation}
is the viscous boundary-layer thickness.

The remaining two equations~\eqref{BLEmc} and~\eqref{BLEenergy} combine into
%
\begin{equation}\label{MEcomb}
\i\omega\rho_0 c_V T -\i\omega\frac{p_0\rho}{\rho_0} - \kappa\pder Ty2 = 0.
\end{equation}
Dividing equation~\eqref{MEcomb} with $T_0$ and rewriting it in several steps, we find that
%
\begin{equation}\label{longone}
\begin{aligned}
0 &= \i\omega\lp\rho_0 c_V \frac T{T_0} - \frac{p_0\rho}{T_0\rho_0}\rp - \kappa\pder{}y2 \frac T{T_0}
= 	\i\omega\rho_0\lp c_V \frac T{T_0} - r\frac{\rho}{\rho_0}\rp - \kappa\pder{}y2  \frac T{T_0}
\\
&= \i\omega\rho_0c_V\lp  \frac T{T_0} - (\gamma-1)\frac{\rho}{\rho_0}\rp - \kappa\pder{}y2  \frac T{T_0}
= \i\omega\rho_0c_V\lp  \frac p{p_0} - \gamma\frac{\rho}{\rho_0}\rp - \kappa\pder{}y2  \lp \frac p{p_0} - \frac\rho{\rho_0}\rp
\\
&= -\lb
\i\omega\rho_0 c_p \lp\frac\rho{\rho_0} - \frac p{\gamma p_0}\rp 
- \kappa\pder{}y2 \lp\frac{\rho}{\rho_0} - \frac p{\gamma p_0}\rp
- \kappa\frac{1-\gamma}{\gamma}\pder{}y2 \frac p{p_0}
\rb
\\
&= -\lb
\i\omega\rho_0 c_p \lp\frac\rho{\rho_0} - \frac p{\gamma p_0}\rp 
- \kappa\pder{}y2 \lp\frac{\rho}{\rho_0} - \frac p{\gamma p_0}\rp
\rb,
\end{aligned}
\end{equation}
where ideal gas law~\eqref{idealgas} has been used in the second equality, expressions~\eqref{gamma} and \eqref{r} in the third equality, equation~\eqref{linIGL}  in the fourth equality,  definition~\eqref{gamma} in the fifth equality, and  equation~\eqref{BLEvmom} in the sixth equality.

Expression~\eqref{longone} is an equation in the \emph{relative excess density}
\begin{equation}\label{rhoedef}
\frac{\rho_\text{e} }{\rho_0}\defeq \frac{\rho}{\rho_0} - \frac p{\gamma p_0}
= \frac 1{\rho_0}\lp \rho - \frac p{c^2} \rp
= \frac1{\rho_0}\lp\rho - \frac{p^ \infty}{c^2} \rp
= \frac1{\rho_0}\lp\rho - \rho^\infty\rp,
\end{equation}
where expression~\eqref{c2} has been used in the second, property~\eqref{pconst} 
in the third, and the isentropic property~\eqref{c2ie} in the fourth equality.
The excess density is the difference between the actual density in the boundary layer and the limiting density outside the boundary layer.
The excess density thus vanishes when $y\to+\infty$, that is, when approaching the exterior isentropic conditions.
At $y=0$, it holds that
\begin{equation}
\rho_\text{e} = \frac{\rho}{\rho_0} - \frac p{\gamma p_0}
= \frac\rho{\rho_0} - \frac p{p_0} + \frac{\gamma-1}\gamma \frac p{p_0}
= - \frac T{T_0} + \frac{\gamma-1}\gamma \frac p{p_0}
= \frac{\gamma-1}\gamma \frac p{p_0},
\end{equation}
where the gas law~\eqref{linIGL} and boundary condition~\eqref{NSBC} are used in the third and fourth equality, respectively.

Altogether, the boundary-value problem for the excess density then becomes
\begin{subequations}\label{edBVP}
\begin{align}
\i\omega\rho_0 c_p \frac{\rho_\text{e}}{\rho_0} 
- \kappa\pder{}y2\frac{\rho_\text{e}}{\rho_0} &= 0
&&\text{for $y>0$,}
\\
\frac{\rho_\text{e}}{\rho_0}&\to 0		&&\text{as $y\to+\infty$,}
\\
\frac{\rho_\text{e}}{\rho_0}&= \frac{\gamma-1}\gamma \frac{p^\infty}{p_0}
&&\text{at $y=0$.}
\end{align}
\end{subequations}
The solution to problem~\eqref{edBVP} is
\begin{equation}
\frac{\rho_\text{e}}{\rho_0}= \frac{\gamma-1}{\gamma}\frac{p^\infty(\bs r)}{p_0}\e^{-(1+\i)y/\delta_T},
\end{equation}
where
\begin{equation}
\delta_T =\sqrt{\frac{2\kappa}{\omega\rho_0 c_p}}.
\end{equation}
is the thermal boundary-layer thickness.

\section{Well-posedness}\label{wellposed}

Here we establish well-posedness of weak solutions to boundary-value problem~\eqref{cavitysystem}, using variational expression~\eqref{BLvarform} as the starting point.
This section assumes familiarity with the Hilbert space approach to the analysis of elliptic partial differential equations~\cite{Wlok87}. 

We assume that domain $\Omega\subset\RR^3$ is  open, bounded, connected, and provided with a Lipschitz boundary.
Moreover, we assume that the closure of two nonempty, relatively open, smooth surfaces $\Gamma_\text{io}$ and $\Gamma_\text{w}$ comprise the boundary $\partial\Omega$.
The assumption of a Lipschitz boundary implies that possible nonsmooth portions of the boundary will be located at the interfaces between $\Gamma_\text{io}$ and $\Gamma_\text{w}$. 

The standard Sobolev space used to define weak solutions to the pressure Helmholtz equations is $H^1(\Omega)$,  the space of square integrable functions in which all partial derivatives are also square integrable.
However, in variational expression~\eqref{BLvarform}, there is also a tangential gradient operator in the boundary integral over $\Gamma_{\text w}$.
Therefore, we will also  require square integrability of the tangential derivatives over $\Gamma_{\text w}$. 
A natural norm on the space of functions considered for weak solutions to problem~\eqref{cavitysystem} is defined by 
\begin{equation}\label{Wdef}
\lVert\ p \rVert^2_W = k_0^2\int_\Omega \lvert p\rvert^2 
+ \int_\Omega \lvert\nabla p\rvert^2 
+ \delta_T(\gamma-1)k_0^2\int_{\Gamma_\text{w}}\lvert p\rvert^2 
+ \delta_V\int_{\Gamma_\text{w}} \lvert\nablaT p\rvert^2,
\end{equation}
in which the constants are chosen to make the various terms dimensionally consistent and to conveniently conform to the terms in variational form~\eqref{BLvarform}.
(Note that all terms are non negative since $\gamma>1$ for all gases.)
The  closure of the space of functions $\mathscr C^1(\overline\Omega)$ in this norm generates a strict subspace $W$ of $H^1(\Omega)$  in which the solutions will be defined.

The variational problem under consideration will then be the following:
\begin{equation}\label{varprob}
\begin{aligned}
&\text{Find $p\in W$ such that}
\\
&a(q, p) = \ell(q)\qquad\forall q\in W,
\end{aligned}
\end{equation}
where
\begin{subequations}\label{aldef}
\begin{align}
a(q,p) &= -k_0^2\int_\Omega q p 
+ \int_\Omega\nabla q\cdot\nabla p
+ \i k_0 \int_{\Gamma_\text{io}} q p		\label{aqpdef}
\\
&\qquad
+ \delta_Tk_0^2\frac{(\i-1)(\gamma-1)}2\int_{\Gamma_\text{w}} q p
+ \delta_V\frac{\i-1}2 \int_{\Gamma_\text{w}}\nablaT q\cdot\nablaT p,
\notag \\
\ell(q) &= 2\i k_0 \int_{\Gamma_\text{io}} q g.				\label{lqdef}
\end{align}
\end{subequations}

To analyze the properties of problem~\eqref{varprob}, we establish first a coercivity estimate and a uniqueness property in the following two lemmas.
\begin{lemma}[Coercivity]\label{Garding}
For any $p\in W$,
\begin{equation}
\big\lvert a(\bar p, p) + 2k_0\lVert p\rVert^2_{L^2(\Omega)}\big\rvert \geq \frac1{2\sqrt{13}}\lVert p\rVert_W^2
\end{equation}
\end{lemma}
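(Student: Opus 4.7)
The plan is to write $z = a(\bar p, p) + 2k_0\lVert p\rVert^2_{L^2(\Omega)}$, extract $\lVert p\rVert_W^2$ from a carefully chosen linear combination of $\Re z$ and $\Im z$, and invoke the elementary bound $\lvert z\rvert \geq (c_1\Re z + c_2\Im z)/\sqrt{c_1^2+c_2^2}$, valid whenever the numerator is nonnegative. Substituting $q = \bar p$ in~\eqref{aqpdef} and using $(\i-1)/2 = -1/2 + \i/2$, the imaginary part of $z$ becomes a sum of manifestly nonnegative quantities---the inlet/outlet contribution $k_0\int_{\Gamma_\text{io}}\lvert p\rvert^2$ together with $\tfrac12\delta_T k_0^2(\gamma-1)\int_{\Gamma_\text{w}}\lvert p\rvert^2$ and $\tfrac12\delta_V\int_{\Gamma_\text{w}}\lvert\nablaT p\rvert^2$---while the real part of $z$ carries the interior terms $(2k_0-k_0^2)\int_\Omega\lvert p\rvert^2 + \int_\Omega\lvert\nabla p\rvert^2$ but with the sign of the two $\Gamma_\text{w}$ boundary contributions flipped.

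The crucial observation is that the wall-boundary pieces appear with the same magnitude in $\Re z$ and $\Im z$, only with opposite signs. Forming the combination $2\Re z + 3\Im z$ therefore turns the $\Gamma_\text{w}$-coefficients into $3/2 - 1 = 1/2$, matching up to the multiplicative factor $1/2$ the corresponding weights in $\lVert p\rVert_W^2$. The interior contributions pick up coefficient $2$ on $\int_\Omega\lvert\nabla p\rvert^2$ and the shifted coefficient $4k_0 - 2k_0^2$ on $\int_\Omega\lvert p\rvert^2$; the additive $2k_0\lVert p\rVert^2_{L^2(\Omega)}$ in the statement is precisely the Gårding-style $L^2$-shift needed to ensure this coefficient dominates at least $\tfrac12 k_0^2$. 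With a common multiplicative factor of $1/2$ on each of the four terms of $\lVert p\rVert_W^2$, we obtain $2\Re z + 3\Im z \geq \tfrac12\lVert p\rVert_W^2$, and the elementary inequality with $(c_1, c_2) = (2, 3)$---so that $\sqrt{c_1^2 + c_2^2} = \sqrt{13}$---gives the claimed bound.

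The main obstacle is the sign-indefinite real part of the Helmholtz form: the term $-k_0^2\int_\Omega\lvert p\rvert^2$ cannot be absorbed directly into a positive lower bound on $\Re z$, which is exactly what forces the additive $L^2$-correction---a standard feature of Gårding-type estimates, reflecting that coercivity of the Helmholtz bilinear form holds only modulo a compact lower-order perturbation. Once the shift is in place, the rest of the argument is bookkeeping: the coefficient pair $(2, 3)$ is not arbitrary but dictated by the requirement that the $\Gamma_\text{w}$ cancellations leave strictly positive residues, and it is precisely this choice that produces the numerical constant $1/(2\sqrt{13})$ appearing in the statement.
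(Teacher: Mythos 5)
Your argument is the paper's proof in different notation: forming $2\Re z + 3\Im z$ and dividing by $\sqrt{13}$ is exactly the paper's step of multiplying $z=a(\bar p,p)+2k_0\lVert p\rVert^2_{L^2(\Omega)}$ by $1-\tfrac32\i$, taking the real part, and bounding that by $\tfrac{\sqrt{13}}{2}\lvert z\rvert$; the choice of weights, the sign cancellation on $\Gamma_\text{w}$, and the final constant $1/(2\sqrt{13})$ all coincide. The only caveat is that the interior coefficient you compute, $4k_0-2k_0^2$, dominates $\tfrac12 k_0^2$ only for $k_0\le 8/5$, so the domination you assert actually requires reading the shift as $2k_0^2\lVert p\rVert^2_{L^2(\Omega)}$ --- which is what makes the term dimensionally consistent and what the paper's own displayed identity (whose $\int_\Omega\lvert p\rvert^2$ coefficient is $k_0^2$) implicitly assumes as well.
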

\begin{proof}
Choose $q = \bar p$ (complex conjugate) in bilinear form~\eqref{aqpdef}, multiply with $(1-3\i/2)$, and take  the real part to obtain
\begin{equation}
\begin{aligned}
&\Re\lb\lp 1-\frac32\i \rp\lp a(\bar p, p) 
+ 2k_0\lVert p\rVert^2_{L^2(\Omega)} \rp\rb 
\\ &\qquad 
= \int\limits_{\Omega}\lvert\nabla p\rvert^2 + k_0^2 \int\limits_\Omega \lvert p\rvert^2 
+ \frac32 k_0\int\limits_{\mathclap{\Gamma_\text{io}}} \lvert p\rvert^2
+ \frac14\delta_T(\gamma-1)k_0^2\int\limits_{\Gamma_\text{w}}  \lvert p\rvert^2
+ \frac14\delta_V\int\limits_{\Gamma_\text{w}}\lvert\nablaT p\rvert^2. 
\\ &\qquad 
\geq \frac14 \lVert p\rVert_W^2, 
\end{aligned}
\end{equation}
which together with inequality
\begin{equation}
\begin{aligned}
& \Re\lb\lp 1-\frac32\i \rp\lp a(\bar p, p) 
+ 2k_0\lVert p\rVert^2_{L^2(\Omega)} \rp\rb&\leq \left\lvert1-\frac32\i \right\rvert
\left\lvert a(\bar p, p) 
+ 2k_0\lVert p\rVert^2_{L^2(\Omega)} \right\rvert
\\ 
&\qquad\qquad
=\frac{\sqrt{13}}2\left\lvert a(\bar p, p) 
+ 2k_0\lVert p\rVert^2_{L^2(\Omega)} \right\rvert
\end{aligned}
\end{equation}
yields the conclusion.
\end{proof}
\begin{lemma}[Injectivity]\label{unique}
For each $k_0> 0$, if $p\in W$ such that 
\begin{equation}\label{ahomeq}
a(q,p) = 0   \qquad\forall q\in W,
\end{equation}
then $p\equiv0$.
\end{lemma}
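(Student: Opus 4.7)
The plan is to use $q = \bar p$ to extract the dissipative part of $a(\cdot,\cdot)$, force the boundary traces of $p$ to vanish, and then invoke unique continuation for the Helmholtz equation to conclude $p\equiv 0$ in $\Omega$.

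First I would test~\eqref{ahomeq} with $q = \bar p$. The resulting identity is precisely~\eqref{e_VPp} with $g\equiv 0$, and extracting the imaginary part (divided by $k_0$) gives
\begin{equation*}
\int_{\Gamma_\text{io}} |p|^2 + \frac{\delta_T k_0 (\gamma-1)}{2}\int_{\Gamma_\text{w}} |p|^2 + \frac{\delta_V}{2k_0}\int_{\Gamma_\text{w}} |\nablaT p|^2 = 0.
\end{equation*}
Since $\gamma>1$ for all gases and $\delta_T,\delta_V,k_0>0$, every term is non-negative and must vanish individually. Hence $p=0$ on $\Gamma_\text{io}$ and on $\Gamma_\text{w}$ (and $\nablaT p = 0$ on $\Gamma_\text{w}$, a consequence now redundant).

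Next I would upgrade~\eqref{ahomeq} to strong form. Testing with $q\in C_0^\infty(\Omega)\subset W$ shows that $-\Delta p - k_0^2 p = 0$ in $\Omega$ in the distributional sense, so $\Delta p \in L^2(\Omega)$ and the normal trace $\partial_n p$ is well defined in $H^{-1/2}(\partial\Omega)$. Substituting this interior equation back into $a(q,p)=0$ and integrating by parts eliminates the volume terms and leaves boundary contributions that must vanish for every $q\in W$. Localizing the test function to a piece of $\Gamma_\text{io}$ recovers the natural impedance condition $\partial_n p + i k_0 p = 0$ on $\Gamma_\text{io}$; combined with $p|_{\Gamma_\text{io}}=0$ from the previous paragraph, this yields $\partial_n p = 0$ on $\Gamma_\text{io}$ as well.

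The final step is unique continuation: $p$ solves the Helmholtz equation on the connected open set $\Omega$ and has vanishing Cauchy data on the open, nonempty, smooth piece $\Gamma_\text{io}\subset\partial\Omega$. Extending $p$ by zero across $\Gamma_\text{io}$ into a neighborhood exterior to $\Omega$ produces an $H^1_{\text{loc}}$ weak solution of the same Helmholtz equation on a strictly larger connected open set, and Aronszajn's unique continuation theorem for second-order elliptic equations with smooth coefficients then forces $p\equiv 0$ in $\Omega$. The main obstacle is this last step: one must check that the vanishing of both traces on $\Gamma_\text{io}$ is strong enough, within the regularity afforded by $W$ and the smoothness of $\Gamma_\text{io}$ assumed at the outset of Appendix~\ref{wellposed}, for the zero-extension to be a bona fide weak solution on the enlarged domain. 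Once this is secured, the standard unique continuation result closes the argument.
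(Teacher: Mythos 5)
Your argument is correct and follows the same overall strategy as the paper's proof of Lemma~\ref{unique}: take the imaginary part of $a(\bar p,p)=0$ to force the boundary traces to vanish, extend $p$ by zero to a larger connected open set, and invoke unique continuation for the Helmholtz operator. The one place where you diverge is the extension step, and the paper's version is worth noting because it removes exactly the obstacle you flag at the end. You extend only across $\Gamma_\text{io}$, which obliges you first to recover the natural condition $\partial_n p + \i k_0 p = 0$ on $\Gamma_\text{io}$ in $H^{-1/2}$ and combine it with $p|_{\Gamma_\text{io}}=0$ to obtain vanishing Cauchy data there. The paper instead observes that the imaginary-part identity kills the trace of $p$ on \emph{all} of $\partial\Omega$ (both $\Gamma_\text{io}$ and $\Gamma_\text{w}$), so that $p\in H^1_0(\Omega)$ and its zero extension to any larger domain $\hat\Omega\supset\overline\Omega$ is automatically in $H^1(\hat\Omega)$; moreover, since every boundary term in the form~\eqref{aqpdef} carries a factor of the (vanishing) trace of $p$ or of $\nablaT p$ on $\Gamma_\text{w}$, the identity $a(q,p)=0$ for arbitrary $q\in W$ reduces at once to $-k_0^2\int_\Omega qp+\int_\Omega\nabla q\cdot\nabla p=0$, which is already the weak Helmholtz equation for the extended function on any open set compactly embedded in $\hat\Omega$ --- the zero Neumann data is encoded implicitly and never has to be computed. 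Your route is sound (vanishing Cauchy data on a smooth, relatively open piece of the boundary does suffice, via the two-sided Green identity, for the zero extension across that piece to be a weak solution), but the paper's route avoids the $H^{-1/2}$ normal-trace machinery entirely and is the cleaner of the two.
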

\begin{proof}
If $p\in W$ satisfies equation~\eqref{ahomeq}, then
\begin{equation}
\Im a(\bar p, p) = k_0\int_{\Gamma_\text{io}} \lvert p\rvert^2
+ \delta_T k_0^2\frac{\gamma-1}2 \int_{\Gamma_w} \lvert p\rvert^2 
+ \delta_V\frac12\int_{\Gamma_w} \lvert\nablaT p\rvert^2 = 0,
\end{equation}
which means that the trace of $p$ on $\partial\Omega$ vanishes, that is, $p\in H^1_0(\Omega)$.
Let  $\hat\Omega$ be an open extension of $\Omega$ in which $\Omega$ is compactly embedded; that is, $\hat\Omega\in\RR^3$ is  open, bounded, and connected such that $\overline\Omega\subset\hat\Omega$.
Since  $p\in H^1_0(\Omega)\subset W$ and satisfies equation~\eqref{ahomeq}, we may extend it by zero into a $p\in H^1(\hat\Omega)$ that, for each open set $K$ compactly embedded into $\hat\Omega$ ($\myol K\subset\hat\Omega$), satisfies
\begin{equation}
-k_0^2\int_\Omega qp + \int_\Omega\nabla q\cdot\nabla p = 0\qquad\forall q\in\mathscr C_0^\infty(K),
\end{equation}
which implies that 
\begin{equation}\label{Hhsat}
-k_0^2 p - \Delta p = 0\qquad\text{almost everywhere in $K$.}
\end{equation}
Since the extended $p$ satisfies property~\eqref{Hhsat} and vanishes identically in the open set $\hat{\Omega}\setminus\myol\Omega$, the unique continuation principle~\cite[Ch.~4.3]{Leis13} implies that $p\equiv0$ in $\Omega$.
\end{proof}

Lemma~\ref{Garding} and~\ref{unique} then imply that
\begin{theorem}
Problem~\eqref{varprob} has a unique solution for each $k_0>0$.
\end{theorem}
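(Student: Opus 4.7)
The plan is to apply the Fredholm alternative to the operator associated with $a$, using Lemma~\ref{Garding} to establish the Fredholm property and Lemma~\ref{unique} to promote injectivity to bijectivity. Since the form $a$ is linear in both arguments while the Hilbert-space machinery is naturally sesquilinear, I would first recast the problem in terms of the sesquilinear form $\tilde a(q, p) = a(\bar q, p)$ on $W \times W$; this is precisely the pairing already used in both preceding lemmas, where $q = \bar p$ was inserted. Boundedness of $\tilde a$ in the norm~\eqref{Wdef} is immediate term by term via Cauchy--Schwarz (noting that the coefficients in~\eqref{aqpdef} have been built into the norm~\eqref{Wdef}), and boundedness of $\ell$ on $W$ follows from the trace inequality applied on $\Gamma_\text{io}$. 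So $\tilde a$ defines a bounded operator $A \colon W \to W^*$ and $\ell \in W^*$, and problem~\eqref{varprob} becomes $A p = \ell$.

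Next I would introduce the shifted sesquilinear form $\tilde b(q,p) = \tilde a(q,p) + 2 k_0 (q,p)_{L^2(\Omega)}$. Lemma~\ref{Garding} asserts exactly that $\lvert \tilde b(p,p) \rvert \geq \frac{1}{2\sqrt{13}} \lVert p \rVert_W^2$, which is the hypothesis of the complex Lax--Milgram lemma. Hence $\tilde b$ induces an isomorphism $B \colon W \to W^*$. The difference $B - A$ is represented by the form $(q,p) \mapsto 2 k_0 \int_\Omega \bar q p$, and since $W \subset H^1(\Omega)$ and $\Omega$ is bounded with Lipschitz boundary, Rellich--Kondrachov gives a compact embedding $W \hookrightarrow L^2(\Omega)$, so $B - A$ is a compact operator from $W$ to $W^*$. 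Consequently $A = B - (B - A)$ is a compact perturbation of an isomorphism, hence Fredholm of index zero.

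The proof then concludes in one line: Lemma~\ref{unique} states $\ker A = \{0\}$, so by the Fredholm alternative $A$ is surjective as well, and therefore an isomorphism. This yields existence and uniqueness of the solution to problem~\eqref{varprob} for every $k_0 > 0$. No substantive obstacle arises at the stage of the theorem itself: Lemma~\ref{Garding} has absorbed the indefiniteness of the Helmholtz term and Lemma~\ref{unique} has invoked unique continuation, so all that remains is the standard compact-perturbation packaging. The only choice to make is notational (bilinear vs.\ sesquilinear), and the only non-trivial fact used beyond the two lemmas is compactness of $W \hookrightarrow L^2(\Omega)$, which is inherited directly from $H^1(\Omega)$.
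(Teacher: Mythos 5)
Your proposal is correct and follows essentially the same route as the paper: boundedness of $a$ and $\ell$ via Cauchy--Schwarz and the trace theorem, the Gårding-type estimate of Lemma~\ref{Garding} combined with compactness of the injection $W\hookrightarrow L^2(\Omega)$ (inherited from $H^1(\Omega)$) to obtain the Fredholm property, and Lemma~\ref{unique} to rule out a nontrivial kernel. The paper simply delegates the Lax--Milgram-plus-compact-perturbation packaging to a cited theorem of Wloka, whereas you spell it out explicitly; the substance is identical.
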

\begin{proof}
Continuity and coercivity of bilinear form~\eqref{aqpdef} follows from the Cauchy--Schwarz inequality and Lemma~\ref{Garding}, respectively.
The trace theorem on $H^1(\Omega)$~\cite[Th.~8.7]{Wlok87} yields that linear form~\eqref{lqdef} is bounded on $H^1(\Omega)$ and thus on $W\subset H^1(\Omega)$.
Moreover, the natural injection of $W$ into $L^2(\Omega)$ is compact due to compactness of the injection of $H^1(\Omega)$ into $L^2(\Omega)$.
The solution theory for general variational problems, as stated for instance by Wloka~\cite[Th.~17.11]{Wlok87}, then yields that that $a$ is \emph{Fredholm}; that is, either the homogeneous problem, equation~\eqref{varprob} with $\ell = 0$ has a nontrivial solution, or the inhomogeneous problem has a unique solution for each $\ell$.
However, Lemma~\ref{unique} says that the homogeneous problem only has the trivial solution for each $k_0>0$.
Thus, variational problem~\eqref{varprob} has a unique solution for each $k_0>0$.
\end{proof}

\section*{Acknowledgments}

Funding: This research was supported in part by the Swedish Research Council, Grant No. 621-2013-3706, the Swedish Foundation for Strategic Research, Grant No. AM13-0029, and  eSSENCE, a strategic collaborative eScience program funded by the Swedish Research Council.
The computations in \S~\ref{sec:driver} were performed on resources provided by the Swedish National Infrastructure for Computing (SNIC) at the High Performance Computing Center North (HPC2N).

\bibliographystyle{elsarticle-num}
\bibliography{mrabbrev,myabbrev,BLarticle}

\end{document}